\newcommand{\removed}[1]{}
\newcommand{\rem}{\mathbf}
\newcommand{\bbbn}{\mathbb{N}}
\newcommand{\ca}{\mathcal{A}}
\newcommand{\cc}{\mathcal{C}}
\newcommand{\ra}{\rightarrow}
\newcommand{\rsa}{\rightsquigarrow}
\newcommand{\bs}{\backslash}
\newcommand{\tsf}{\textsf}
\definecolor{DarkRed}{RGB}{182,11,1}
\newtheorem{observation}{Observation}
\theoremstyle{theorem}
\newtheorem{proposition}{Proposition}
\title{How Many Cooks Spoil the Soup?\footnote{Supported in part by the School of EEE/CS of the University of Liverpool, NeST initiative, and the EU IP FET-Proactive project \tsf{MULTIPLEX} under contract no 317532.}}
\titlerunning{How Many Cooks Spoil the Soup?} 
\author[1,2]{Othon Michail}
\author[1,2]{Paul G. Spirakis}
\affil[1]{Department of Computer Science, University of Liverpool, Liverpool, UK}
\affil[2]{Computer Technology Institute and Press ``Diophantus'' (CTI), Patras, Greece\\
  \texttt{michailo@cti.gr, P.Spirakis@liverpool.ac.uk}}
\authorrunning{O. Michail and P.\,G. Spirakis} 
\keywords{coordinator, parallelism, symmetry, symmetry breaking, population protocol, leader election, majority, parity}
\begin{document}

\maketitle

\begin{abstract}
In this work, we study the following basic question: \emph{``How much parallelism does a distributed task permit?''} Our definition of \emph{parallelism} (or \emph{symmetry}) here is not in terms of speed, but in terms of identical \emph{roles} that processes have at the same time in the execution. For example, we may ask: \emph{``Can a given task be solved by a protocol that always has at least two processes in the same role at the same time?''} (i.e., by a protocol that never elects a \emph{unique leader}). We choose to initiate this study in population protocols, a very simple model that not only allows for a straightforward definition of what a role is, but also encloses the challenge of isolating the properties that are due to the protocol from those that are due to the \emph{adversary scheduler}, who controls the interactions between the processes. In particular, we define the role of a process at a given time to be equivalent to the \emph{state} of the process at that time. Moreover, we isolate the symmetry that is due to the protocol (\emph{inherent symmetry}) by focusing on those schedules that maximize symmetry for that protocol and observing how much symmetry breaking the protocol is forced to achieve in order to solve the problem. To allow for such \emph{symmetry maximizing} schedules we consider \emph{parallel schedulers} that in every step may select a whole collection of pairs of nodes (up to a perfect matching) to interact and not just a single pair. Based on these definitions of symmetric computation, we (i) give a \emph{partial characterization} of the set of predicates on input assignments that can be \emph{stably computed with maximum symmetry}, i.e., $\Theta(N_{min})$, where $N_{min}$ is the minimum multiplicity of a state in the initial configuration, and (ii) we turn our attention to the remaining predicates (that have some essentially different properties) and prove a \emph{strong impossibility result} for the \emph{parity} predicate: the inherent symmetry of any protocol that stably computes it is \emph{upper bounded by a constant that depends on the size of the protocol}.     
\end{abstract}

\section{Introduction}
\label{sec:intro}

George Washington said \emph{``My observation on every employment in life is, that, wherever and whenever one person is found adequate to the discharge of a duty by close application thereto, it is worse executed by two persons, and scarcely done at all if three or more are employed therein''} \cite{Sp1855}. The goal of the present paper is to investigate whether the analogue of this observation in simple distributed systems is true. In particular, we ask whether a task that can be solved when a single process has a crucial duty is still solvable when that (and any other) duty is assigned to more than one process. Moreover, we are interested in quantifying the degree of \emph{parallelism} (also called \emph{symmetry} in this paper) that a task is susceptible of.

Leader election is a task of outstanding importance for distributed algorithms. One of the oldest \cite{An80} and probably still one of the most commonly used approaches \cite{Ly96,AW04,AADFP06,KLO10,GH13} for solving a distributed task in a given setting, is to execute a distributed algorithm that manages to elect a unique leader (or \emph{coordinator}) in that setting and then compose this (either sequentially or in parallel) with a second algorithm that can solve the task by assuming the existence of a unique leader. Actually, it is quite typical, that the tasks of electing a leader and successfully setting up the composition enclose the difficulty of solving many other higher-level tasks in the given setting.

Due to its usefulness in solving other distributed tasks, the leader election problem has been extensively studied, in a great variety of distributed settings \cite{Ly96,AW04,GH13,FSW14,AG15}. Still, there is an important point that is much less understood, concerning whether an election step is \emph{necessary} for a given task and \emph{to what extent} it can be avoided. Even if a task $T$ can be solved in a given setting by first passing through a configuration with a unique leader, it is still valuable to know whether there is a correct algorithm for $T$ that avoids this. In particular, such an algorithm succeeds without the need to ever have less than $k$ processes in a given ``role'', and we are also interested in how large $k$ can be without sacrificing solvability.

Depending on the application, there are several ways of defining what the ``role'' of a process at a given time in the execution is. In the typical approach of electing a unique leader, a process has the leader role if a $leader$ variable in its local memory is set to true and it does not have it otherwise. In other cases, the role of a process could be defined as its complete local history. In such cases, we would consider that two processes have the same role after $t$ steps iff both have the same local history after each  one of them has completed $t$ local steps. It could also be defined in terms of the external interface of a process, for example, by the messages that the process transmits, or it could even correspond to the branch of the program that the process executes. In this paper, as we shall see, we will define the role of a process at a given time in the execution, as the entire content of its local memory. So, in this paper, two processes $u$ and $v$ will be regarded to have the same role at a given time $t$ iff, at that time, the local state of $u$ is equal to the local state of $v$. 

Understanding the parallelism that a distributed task allows, is of fundamental importance for the following reasons. First of all, usually, the more parallelism a task allows, the more efficiently it can be solved. Moreover, the less symmetry a solution for a given problem has to achieve in order to succeed, the more vulnerable it is to faults. For an extreme example, if a distributed algorithm elects in every execution a unique leader in order to solve a problem, then a single crash failure (of the leader) can be fatal.   

\subsection{Our Approach}
\label{subsec:approach}

We have chosen to initiate the study of the above problem in a very minimal distributed setting, namely in Population Protocols of Angluin \emph{et al.} \cite{AADFP06} (see Section \ref{subsec:pw} for more details and references). One reason that makes population protocols convenient for the problem under consideration, is that the role of a process at a given step in the execution can be defined in a straightforward way as the state of the process at the beginning of that step. So, for example, if we are interested in an execution of a protocol that stabilizes to the correct answer without ever electing a unique leader, what we actually require is an execution that, up to stability, never goes through a configuration in which a state $q$ is the state of a single node, which implies that, in every configuration of the execution, every state $q$ is either absent or the state of at least two nodes. Then, it is straightforward to generalize this to any symmetry requirement $k$, by requiring that, in every configuration, every state $q$ is either absent or the state of at least $k$ nodes.

What is not straightforward in this model (and in any model with adversarially determined events), is how to isolate the symmetry that is \emph{only} due to the protocol. For if we require the above condition on executions to be satisfied \emph{for every} execution of a protocol, then most protocols will fail trivially, because of the power of the adversary scheduler. In particular, there is almost always a way for the scheduler to force the protocol to break symmetry maximally, for example, to make it reach a configuration in which some state is the state of a single node, even when the protocol does not have an \emph{inherent} mechanism of electing a unique state. Moreover, though for computability questions it is sufficient to assume that the scheduler selects in every step a single pair of nodes to interact with each other, this type of a scheduler is problematic for estimating the symmetry of protocols. The reason is that even fundamentally parallel operations, necessarily pass through a highly-symmetry-breaking step. For example, consider the rule $(a,a)\ra(b,b)$ and assume that an even number of nodes are initially in state $a$. The goal is here for the protocol to convert all $a$s to $b$s. If the scheduler could pick a perfect matching between the $a$s, then in one step all $a$s would be converted to $b$s, and additionally the protocol would never pass trough a configuration in which a state is the state of fewer than $n$ nodes. Now, observe that the sequential scheduler can only pick a single pair of nodes in each step, so in the very first step it yields a configuration in which state $b$ is the state of only 2 nodes. Of course, there are turnarounds to this, for example by taking into account only equal-interaction configurations, consisting of the states of the processes after all processes have participated in an equal number of interactions, still we shall follow an alternative approach that simplifies the arguments and the analysis.             

In particular, we will consider schedulers that can be maximally parallel. Such a scheduler, selects in every step a matching (of any possible size) of the complete interaction graph, so, in one extreme, it is still allowed to select only one interaction but, in the other extreme, it may also select a perfect matching in a single step. Observe that this scheduler is different both from the sequential scheduler traditionally used in the area of population protocols and from the fully parallel scheduler which assumes that $\Theta(n)$ interactions occur in parallel in every step. Actually, several recent papers \cite{CCDS14,AG15,DS15} assume a fully parallel scheduler implicitly, by defining the model in terms of the sequential scheduler and then performing their analysis in terms of parallel time, defined as the sequential time divided by $n$.   

Finally, in order to isolate the \emph{inherent} symmetry, i.e., the symmetry that is only due to the protocol, we shall focus on those schedules \footnote{By ``schedule'' we mean an ``execution'' throughout.} that achieve as high symmetry as possible for the given protocol. Such schedules may look into the protocol and exploit its structure so that the chosen interactions maximize parallelism. It is crucial to notice that this restriction does by no means affect correctness. Our protocols are still, as usual, required to stabilize to the correct answer in \emph{any} fair execution (and, actually, in this paper against a more generic scheduler than the one traditionally assumed). The above restriction is only a convention for estimating the \emph{inherent} symmetry of a protocol designed to operate in an adversarial setting. On the other hand, one does not expect this \emph{measure of inherent symmetry} to be achieved by the majority of executions. If, instead, one is interested in some \emph{measure of the observed symmetry}, then it would make more sense to study an \emph{expected observed symmetry} under some probabilistic assumption for the scheduler. We leave this as an interesting direction for future research (see Section \ref{sec:conclusions} for more details on this). 

For a given initial configuration, we shall estimate the symmetry breaking performed by the protocol not in any possible execution but an execution in which the scheduler tries to maximize the symmetry. In particular, we shall define the symmetry of a protocol on a given initial configuration $c_0$ as the maximum symmetry achieved over all possible executions on $c_0$. So, in order to lower bound by $k$ the symmetry of a protocol on a given $c_0$, it will be sufficient to present \emph{a} schedule in which the protocol stabilizes without ever ``electing'' fewer than $k$ nodes. On the other hand, to establish an upper bound of $h$ on symmetry, we will have to show that \emph{in every} schedule (on the given $c_0$) the protocol ``elects'' at most $h$ nodes. Then we may define the symmetry of the protocol on a set of initial configurations as the minimum of its symmetries over those initial configurations. The symmetry of a protocol (as a whole) shall be defined as a function of some parameter of the initial configuration and is deferred to Section \ref{sec:preliminaries}.
\begin{observation}
The above definition leads to very strong impossibility results, as these upper bounds are also upper bounds on the observed symmetry. In particular, if we establish that the symmetry of a protocol $\ca$ is at most $h$ then, it is clear that under \emph{any scheduler} the symmetry of $\ca$ is at most $h$. 
\end{observation}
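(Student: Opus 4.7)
The plan is to show that the observation is an immediate consequence of the definition of symmetry introduced in the paragraphs just above it, so no new machinery is needed; the work is entirely in carefully unpacking quantifiers. First I would recall the definitional setup: for a fixed initial configuration $c_0$, the symmetry of protocol $\ca$ on $c_0$ is declared to be the \emph{maximum} symmetry achieved over all schedules starting from $c_0$; the symmetry of $\ca$ on a set of initial configurations is the \emph{minimum} of these quantities over that set; and finally the symmetry of $\ca$ as a whole is a function of a parameter of the initial configuration. Thus every ``symmetry'' number in the paper is a min-over-initial-configurations of a max-over-schedules.

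Second, I would spell out what the hypothesis ``the symmetry of $\ca$ is at most $h$'' actually asserts after unrolling this definition. Fixing any initial configuration $c_0$ in the relevant set and any schedule $\sigma$ on $c_0$, the symmetry realised along the execution of $\ca$ under $\sigma$ from $c_0$ is one of the values participating in the inner maximum, so it is at most the symmetry of $\ca$ on $c_0$, which in turn is at most $h$ (possibly after accounting for the outer minimum / parameter-dependence, but this only strengthens the bound for every individual $c_0$ in the domain).

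Third, I would observe that an arbitrary (possibly adversarial, possibly probabilistic, possibly ``symmetry-minimising'') scheduler produces, in any given run, precisely such a schedule $\sigma$; hence the symmetry \emph{observed} in that run is bounded by $h$ deterministically, and in the randomised case even the supremum over the support of the induced distribution on schedules is bounded by $h$. In particular, no scheduler can exceed $h$, because exceeding $h$ would exhibit a schedule whose symmetry is larger than the definitional maximum over all schedules, a contradiction.

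There is no real obstacle here: the whole content is the observation that the ``inherent'' symmetry is defined as the best case over schedules, so it dominates every other case. The only point requiring a hint of care is to make explicit that the maximally-parallel scheduler discussed earlier is introduced to make the supremum in the definition achievable and to give a clean way to \emph{lower bound} symmetry, but the supremum itself already ranges over \emph{all} admissible schedules, and that is what forces every upper bound proved in this framework to transfer to every scheduler, including the sequential and fully-parallel ones that appear elsewhere in the literature.
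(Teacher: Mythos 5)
Your proposal is correct and coincides with the paper's own (implicit) justification: the paper states this as an unproved observation precisely because, as you unpack, $s(\ca,c_0)$ is defined as the \emph{maximum} of $s(\ca,\alpha)$ over all fair executions $\alpha\in\Gamma(c_0)$, so an upper bound $h$ on this maximum bounds the symmetry of the particular execution induced by any scheduler whatsoever. Your added remarks on probabilistic schedulers and on the parallel scheduler subsuming the sequential one are consistent with the paper's framework and require nothing beyond the quantifier-unrolling you give.
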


Section \ref{sec:preliminaries} brings together all definitions and basic facts that are used throughout the paper. In Section \ref{sec:symmetric-comp}, we give a set of positive results. The main result here is a partial characterization, showing that a wide subclass of semilinear predicates is computed with symmetry $\Theta(N_{min})$, which is asymptotically optimal. Then, in Section \ref{sec:symmetry-breaking}, we study some basic predicates that seem to require much symmetry breaking. In particular, we study the \emph{majority} and the \emph{parity} predicates. For majority we establish a constant symmetry, while for parity we prove a strong impossibility result, stating that the symmetry of any protocol that stably computes it, is upper bounded by an integer depending only on the size of the protocol (i.e., a constant, compared to the size of the system). The latter implies that there exist predicates which can \emph{only} be computed by protocols that perform some sort of leader-election (not necessarily a unique leader but at most a constant number of nodes in a distinguished leader role). In Section \ref{sec:conclusions}, we give further research directions that are opened by our work. Finally, the Appendices \ref{app:proofs-output-stable}, \ref{app:majority-weak-result}, and \ref{app:experiments} provide all omitted details and proofs.

\subsection{Further Related Work}
\label{subsec:pw}

In contrast to static systems with unique identifiers (IDs) and dynamic systems, the role of symmetry in \emph{static anonymous systems} has been deeply investigated \cite{An80,YK96,Kr97,FMS98}. \emph{Similarity} as a way to compare and contrast different models of concurrent programming has been defined and studied in \cite{JS85}. One (restricted) type of symmetry that has been recently studied in systems with IDs is the existence of \emph{homonyms}, i.e., processes that are initially assigned the same ID \cite{DFGKR11}. Moreover, there are several standard models of distributed computing that do not suffer from a necessity to break symmetry globally (e.g., to elect a leader) like Shared Memory with Atomic Snapshots \cite{AADGMS93,AW04}, Quorums \cite{Sk82,PW95,MRWW01}, and the LOCAL model \cite{Pe00, Su13}. 

Population Protocols  were originally motivated by highly dynamic networks of simple sensor nodes that cannot control their mobility. The first papers focused on the computational capabilities of the model which have now been almost completely characterized. In particular, if the interaction network is complete (as is also the case in the present paper), i.e., one in which every pair of processes may interact, then the computational power of the model is equal to the class of the \emph{semilinear predicates} (and the same holds for several variations) \cite{AAER07}. Interestingly, the generic protocol of \cite{AADFP06} that computes all semilinear predicates, elects a unique leader in every execution and the same is true for the construction in \cite{CDS14}. Moreover, according to \cite{AG15}, all known generic constructions of semilinear predicates ``fundamentally rely on the election of a single initial \emph{leader} node, which coordinates phases of computation''. Semilinearity of population protocols persists up to $o(\log\log n)$ local space but not more than this \cite{MNPS11}. If additionally the connections between processes can hold a state from a finite domain, then the computational power dramatically increases to the commutative subclass of $\rem{NSPACE}(n^2)$ \cite{MCS11-2}. Recently, Doty \cite{Do14} demonstrated the formal equivalence of population protocols to \emph{chemical reaction networks} (CRNs), which model chemistry in a \emph{well-mixed solution}. Moreover, the recently proposed \emph{Network Constructors} extension of population protocols \cite{MS16} is capable of constructing arbitrarily complex stable networks. Czyzowicz \emph{et al.} \cite{CGKK15} have recently studied the relation of population protocols to antagonism of species, with dynamics modeled by discrete Lotka-Volterra equations. Finally, in \cite{CCDS14}, the authors highlighted the importance of executions that necessarily pass through a ``bottleneck'' transition (meaning a transition between two states that have only constant counts in the population, which requires $\Omega(n^2)$ expected number of steps to occur), by proving that protocols that avoid such transitions can only compute existence predicates. To the best of our knowledge, our type of approach, of computing predicates stably without \emph{ever} electing a unique leader, has not been followed before in this area (according to \cite{AG15}, ``\cite{DH15} proposes a leader-less framework for population computation'', but this should not be confused with what we do in this paper, as it only concerns the achievement of dropping the requirement for a \emph{pre-elected} unique leader that was assumed in all previous results for that problem). For introductory texts to population protocols, the interested reader is encouraged to consult \cite{AR09} and \cite{MCS11}.

\section{Preliminaries}
\label{sec:preliminaries}

A \emph{population protocol} (PP) is a 6-tuple $(X,Y,Q,I,O,\delta)$, where $X$, $Y$, and $Q$ are all finite sets and $X$ is the \emph{input alphabet}, $Y$ is the \emph{output alphabet}, $Q$ is the set of \emph{states}, $I\colon X\to Q$ is the \emph{input function}, $O\colon Q\to Y$ is the \emph{output function}, and $\delta\colon Q\times Q\to Q\times Q$ is the \emph{transition function}.

If $\delta(a,b)=(a^{\prime},b^{\prime})$, we call $(a,b)\rightarrow (a^{\prime},b^{\prime})$ a \emph{transition}. A transition $(a,b) \rightarrow (a^{\prime},b^{\prime})$ is called \emph{effective} if $x\neq x^\prime$ for at least one $x\in\{a,b\}$ and \emph{ineffective} otherwise. When we present the transition function of a protocol we only present the effective transitions. The system consists of a population $V$ of $n$ distributed \emph{processes} (also called \emph{nodes}). In the generic case, there is an underlying \emph{interaction graph} $G=(V,E)$ specifying the permissible interactions between the nodes. Interactions in this model are always pairwise. In this work, $G$ is a \emph{complete directed interaction graph}.

Let $Q$ be the set of states of a population protocol $\ca$. A configuration $c$ of $\ca$ on $n$ nodes is an element of $\bbbn_{\geq 0}^{|Q|}$, such that, for all $q\in Q$, $c[q]$ is equal to the number of nodes that are in state $q$ in configuration $c$ and it holds that $\sum_{q\in Q} c[q]=n$. For example, if $Q=\{q_0,q_1,q_2,q_3\}$ and $c=(7,12,52,0)$, then, in $c$, 7 nodes of the $7+12+52+0=71$ in total, are in state $q_0$, 12 nodes in state $q_1$, and 52 nodes in state $q_2$.

Execution of the protocol proceeds in discrete steps and it is determined by an \emph{adversary scheduler} who is allowed to be \emph{parallel}, meaning that, in every step, it may select one or more pairwise interactions (up to a maximum matching) to occur at the same time. This is an important difference from classical population protocols where the scheduler could only select a single interaction per step. More formally, in every step, a non-empty matching $(u_1,v_1),(u_2,v_2),\ldots,(u_k,v_k)$ from $E$ is selected by the scheduler and, for all $1\leq i\leq k$, the nodes $u_i,v_i$ interact with each other and update their states according to the transition function $\delta$. A \emph{fairness condition} is imposed on the adversary to ensure the protocol makes progress. An infinite execution is \emph{fair} if for every pair of configurations $c$ and $c^{\prime}$ such that $c\rightarrow c^{\prime}$ (i.e., $c$ can go in one step to $c^\prime$), if $c$ occurs infinitely often in the execution then so does $c^{\prime}$.

In population protocols, we are typically interested in computing predicates on the inputs, e.g., $N_a\geq 5$, being true whenever there are at least 5 $a$s in the input. \footnote{We shall use throughout the paper $N_i$ to denote the number of nodes with input/state $i$.} Moreover, computations are \emph{stabilizing} and not terminating, meaning that it suffices for the nodes to eventually converge to the correct output. We say that a protocol \emph{stably computes} a predicate if, on any population size, any input assignment, and any fair execution on these, all nodes eventually stabilize their outputs to the value of the predicate on that input assignment.

We define the \emph{symmetry $s(c)$ of a configuration $c$} as the \emph{minimum multiplicity of a state that is present in $c$} (unless otherwise stated, in what follows by ``symmetry'' we shall always mean ``inherent symmetry''). That is, $s(c)=\min_{q\in Q \;:\; c[q]\geq 1}\{c[q]\}$. For example, if $c=(0,4,12,0,52)$ then $s(c)=4$, if $c=(1,\ldots)$ then $s(c)=1$, which is the minimum possible value for symmetry, and if $c=(n,0,0,\ldots,0)$ then $s(c)=n$ which is the maximum possible value for symmetry. So, the range of the symmetry of a configuration is $\{1,2,\ldots,n\}$.

Let $\mathcal{C}_0(\ca)$ be the set of all \emph{initial configurations} for a given protocol $\ca$. Given an initial configuration $c_0\in \mathcal{C}_0(\ca)$, denote by $\Gamma(c_0)$ the set of all fair executions of $\ca$ that begin from $c_0$, each execution being truncated to its prefix \emph{up to stability}. \footnote{In this work, we only require protocols to preserve their symmetry \emph{up to stability}. This means that a protocol is allowed to break symmetry arbitrarily after stability, e.g., even elect a unique leader, without having to pay for it. We leave as an interesting open problem the comparison of this convention to the apparently harder requirement of maintaining symmetry forever.} 

Given any initial configuration $c_0$ and any execution $\alpha\in \Gamma(c_0)$, define the \emph{symmetry breaking of $\ca$ on $\alpha$} as the difference between the symmetry of the initial configuration of $\alpha$ and the minimum symmetry of a configuration of $\alpha$, that is, the \emph{maximum drop in symmetry} during the execution. Formally, $b(\ca,\alpha)=s(c_0)-\min_{c\in \alpha}\{s(c)\}$. Also define the \emph{symmetry of $\ca$ on $\alpha$} as $s(\ca,\alpha)=\min_{c\in \alpha}\{s(c)\}$. Of course, it holds that $s(\ca,\alpha)=s(c_0)-b(\ca,\alpha)$. Moreover, observe that, for all $\alpha\in \Gamma(c_0)$, $0\leq b(\ca,\alpha)\leq s(c_0)-1$ and $1\leq s(\ca,\alpha)\leq s(c_0)$. In several cases we shall denote $s(c_0)$ by $N_{min}$.

The \emph{symmetry breaking of a protocol $\ca$ on an initial configuration $c_0$} can now be defined as $b(\ca,c_0)=\min_{\alpha\in \Gamma(c_0)}\{b(\ca,\alpha)\}$ and:
\begin{definition}
We define the \emph{symmetry of $\ca$ on $c_0$} as $s(\ca,c_0)=\max_{\alpha\in \Gamma(c_0)}\{s(\ca,\alpha)\}$.
\end{definition}

\begin{remark}
To estimate the \emph{inherent} symmetry with which a protocol computes a predicate on a $c_0$, we execute the protocol against an \emph{imaginary} scheduler who is a \emph{symmetry maximizer}.
\end{remark}

Now, given the set $\cc(N_{min})$ of all initial configurations $c_0$ such that $s(c_0)=N_{min}$, we define the \emph{symmetry breaking of a protocol $\ca$ on $\cc(N_{min})$} as $b(\ca,N_{min})=\max_{c_0\in \cc(N_{min})}\{b(\ca,c_0)\}$ and: 
\begin{definition}
We define the \emph{symmetry of $\ca$ on $\cc(N_{min})$} as $s(\ca,N_{min})=\min_{c_0\in \cc(N_{min})}\{s(\ca,c_0)\}$.
\end{definition}
Observe again that $s(\ca,N_{min})=N_{min}-b(\ca,N_{min})$ and that $0\leq b(\ca,N_{min})\leq N_{min}-1$ and $1\leq s(\ca,N_{min})\leq N_{min}$.

This means that, in order to establish that a protocol $\ca$ is at least $g(N_{min})$ symmetric asymptotically (e.g., for $g(N_{min})=\Theta(\log N_{min})$), we have to show that for every sufficiently large $N_{min}$, the symmetry breaking of $\ca$ on $\cc(N_{min})$ is at most $N_{min}-g(N_{min})$, that is, to show that for all initial configurations $c_0\in\cc(N_{min})$ there exists \emph{an} execution on $c_0$ that drops the initial symmetry by at most $N_{min}-g(N_{min})$, e.g., by at most $N_{min}-\log N_{min}$ for $g(N_{min})=\log N_{min}$, or that does not break symmetry at all in case $g(N_{min})=N_{min}$. On the other hand, to establish that the symmetry is at most $g(N_{min})$, e.g., at most $1$ which is the minimum possible value, one has to show a symmetry breaking of at least $N_{min}-g(N_{min})$ on infinitely many $N_{min}$s.

\section{Predicates of High Symmetry}
\label{sec:symmetric-comp}

In this section, we try to identify predicates that can be stably computed with much symmetry. We first give an indicative example, then we generalize to arrive at a partial characterization of the predicates that can be computed with maximum symmetry, and, finally, we highlight the role of output-stable states in symmetric computations.   

\subsection{An Example: Count-to-$x$}
\label{subsec:count-to-x}

\noindent\textbf{Protocol} \emph{Count-to-$x$}: $X=\{0,1\}$, $Q=\{q_0,q_1,q_2,\ldots,q_x\}$, $I(\sigma)=q_{\sigma}$, for all $\sigma\in X$, $O(q_x)=1$ and $O(q)=0$, for all $q\in Q\bs\{q_x\}$, and $\delta$: $(q_i,q_j)\ra (q_{i+j},q_0)$, if $i+j<x$, $(q_i,q_j)\ra (q_x,q_x)$, otherwise.

\begin{proposition} \label{pro:count-to-x}
The symmetry of Protocol \emph{Count-to-$x$}, for any $x=O(1)$, is at least $(2/3)\lfloor N_{min}/x\rfloor - (x-1)/3$, when $x\geq 2$, and $N_{min}$, when $x=1$; i.e., it is $\Theta(N_{min})$ for any $x=O(1)$.
\end{proposition}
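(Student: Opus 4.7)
The plan is to prove the lower bound by exhibiting, for every initial configuration $c_0\in\cc(N_{min})$, a fair execution $\alpha\in\Gamma(c_0)$ along which the minimum multiplicity of every present state is at least the claimed bound; by the definition of $s(\ca,c_0)=\max_{\alpha\in\Gamma(c_0)} s(\ca,\alpha)$ this will give the proposition.

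For $x=1$, the only effective transition is $(q_0,q_1)\to(q_1,q_1)$ and the predicate is $N_1\ge 1$. If $N_0=0$ or $N_1=0$, then $c_0$ is already stable with symmetry $n\ge N_{min}$. Otherwise, w.l.o.g.\ $N_1=N_{min}\le N_0$, and the schedule I would use repeatedly pairs every $q_1$ with a distinct $q_0$ in one parallel step, doubling $N_1$, as long as the resulting $N_0$ remains $\ge N_{min}$. Once $N_1\ge n/2$, a single final step pairs every remaining $q_0$ with a $q_1$ to finish with every node in $q_1$. Throughout the execution the minimum multiplicity of every present state stays $\ge N_{min}$, which matches the claim.

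For $x\ge 2$ and $N_{min}<x$ the bound is $\le 0$, vacuously satisfied. For the main case $x\ge 2$, $N_{min}\ge x$, both $N_0,N_1\ge x$, so the predicate evaluates to $1$ and every node must eventually stabilize to $q_x$. Write $N_1=Mx+r$ with $M=\lfloor N_1/x\rfloor\ge\lfloor N_{min}/x\rfloor$ and $0\le r<x$. I split the exhibited execution into two phases. In an \emph{accumulation phase}, I partition the $q_1$ nodes into $M$ groups of size $x$ (plus $r$ leftover $q_1$s) and use the transitions $(q_i,q_j)\to(q_{i+j},q_0)$ (or $\to(q_x,q_x)$ when $i+j\ge x$) to reduce each group to a single $q_x$. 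The $M$ groups are advanced in lockstep so that at every round each group is at the same combinatorial stage of its reduction tree; a careful choice of the reduction tree inside a group (which is nontrivial when $x$ is not a power of $2$) ensures that each intermediate state $q_2,\dots,q_{x-1}$ that appears has multiplicity at least a constant fraction of $M$. In a subsequent \emph{propagation phase}, the $M$ copies of $q_x$ are spread via $(q_i,q_x)\to(q_x,q_x)$, pairing each $q_x$ with a distinct non-$q_x$ node to double $N_x$ each round, using the very first propagation step also to absorb the $r$ residual $q_1$s, followed by a final sweep once $N_x\ge n/2$ that converts every remaining $q_0$ in a single step.

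The main obstacle I anticipate is the symmetry bookkeeping at the boundary between accumulation and propagation: just before the first propagation step, the configuration contains $M$ $q_x$s, up to $r\le x-1$ leftover $q_1$s, and a large number of $q_0$s, so the multiplicity of the $q_1$-slot is only $r$. Unless the accumulation is arranged to either avoid producing these residuals (when $r=0$) or to merge them into an oversized group at the cost of a bounded imbalance, the symmetry drops to $r\le x-1$ and the claim fails. Performing this merge exacts a constant-size loss both in the multiplicity of the intermediate states (encoding the $2/3$ factor) and in the count of $q_x$s (encoding the $-(x-1)/3$ correction), after which routine counting round-by-round, combined with the monotone doubling of $N_x$ during propagation, establishes the stated lower bound $(2/3)\lfloor N_{min}/x\rfloor-(x-1)/3$.
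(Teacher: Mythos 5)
Your skeleton (grouping the $q_1$s into $\lfloor N_1/x\rfloor$ groups of size $x$, advancing the groups in lockstep so every intermediate state keeps multiplicity $\Theta(\lfloor N_{min}/x\rfloor)$, then disseminating $q_x$) is the same as the paper's, and you correctly identify the crux: the residue of $r\leq x-1$ leftover $q_1$s threatens to leave a configuration in which $q_1$ has multiplicity only $r$. But the fix you propose --- merging the residuals into one oversized group ``at the cost of a bounded imbalance'' --- does not work. In an oversized group of $x+r$ ones, the accumulator saturates after absorbing $x-1$ of them: once it reaches $q_{x-1}$, its next interaction produces $(q_x,q_x)$ and ends the group's activity, so exactly $r$ residual $q_1$s survive that step with no way to have been absorbed within the group. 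Since the other groups complete in lockstep at the same time, the resulting configuration contains only $r$ copies of $q_1$; symmetry has already dropped to $r\leq x-1$, and no later sweep by $q_x$s can undo a drop that has occurred. The same obstruction recurs under any within-group reduction tree (your power-of-two worry is a red herring --- lockstep alone gives multiplicity $\geq \lfloor N_1/x\rfloor$ for every state held inside groups): any sub-threshold residual mass that is not kept synchronized with a large cohort of identically-stated nodes produces a small-multiplicity configuration.

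The paper's mechanism, which is the actual source of the constants you attribute to ``constant-size losses'' of the merge, is a \emph{staggered completion}. When every group holds one $q_{x-1}$ and one $q_1$ (so $N_{x-1}=\lfloor N_1/x\rfloor$ and $N_1=\lfloor N_1/x\rfloor+r$), the scheduler matches only $y$ of the $(q_1,q_{x-1})$ pairs in one step, with $y\geq(\lfloor N_1/x\rfloor+r)/3$ chosen exactly so that $N_x=2y\geq N_1=\lfloor N_1/x\rfloor-y+r$. This keeps a large cohort of $q_1$s alive until there are enough $q_x$s to sweep \emph{all} remaining $q_1$s --- residue included --- in a single parallel step, and then all remaining $q_{x-1}$s in one more step. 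The minimum multiplicity encountered is $N_{x-1}=\lfloor N_1/x\rfloor-y=(2/3)\lfloor N_1/x\rfloor-r/3\geq(2/3)\lfloor N_{min}/x\rfloor-(x-1)/3$; the $2/3$ is a constant-\emph{fraction}, i.e.\ $\Theta(N_{min})$-sized, loss from pausing two thirds of the groups, not a constant additive loss, and without this step your ``routine counting'' terminates with symmetry $r$, falsifying the claim. Two smaller gaps: your doubling schedules (for $x=1$ and for the final conversion of the $q_0$s) can strand a depleted state at small positive multiplicity, e.g.\ full doubling against $N_0=N_x+1$ leaves a single $q_0$, so one must partially match to rebalance the two counts before the final one-step sweep; and the case analysis for initial configurations containing $q_0$s (the paper's comparison of $N_0$ with $N_{x-1}(t^\prime)$) is missing from your argument.
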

\begin{proof}
The scheduler \footnote{Always meaning the \emph{imaginary symmetry-maximizing scheduler} when lower-bounding the symmetry.} partitions the $q_1$s, let them be $N_1(0)$ initially and denoted just $N_1$ in the sequel, into $\lfloor N_1/x\rfloor$ groups of $x$ $q_1$s each, possibly leaving an incomplete group of $r\leq x-1$ $q_1$s residue. Then, in each complete group, it performs a sequential gathering of $x-3$ other $q_1$s to one of the nodes, which will go through the states $q_1,q_2,\ldots,q_{x-1}$. The same gathering is performed in parallel to all groups, so every state that exists in one group will also exist in every other group, thus, its cardinality never drops below $\lfloor N_1/x\rfloor$. In the end, at step $t$, there are many $q_0$s, $N_{x-1}(t)=\lfloor N_1/x\rfloor$, and $N_1(t)=\lfloor N_1/x\rfloor + r$, where $0\leq r\leq x-1$ is the residue of $q_1$s. That is, in all configurations so far, the symmetry has not dropped below $\lfloor N_1/x\rfloor$. 

Now, we cannot pick, as a symmetry maximizing choice of the scheduler, a perfect bipartite matching between the $q_1$s and the $q_{x-1}$s converting them all to the alarm state $q_x$, because this could possibly leave the symmetry-breaking residue of $q_1$s. What we can do instead, is to match in one step as many as we can so that, after the corresponding transitions, $N_x(t^\prime)\geq N_1(t^\prime)$ is satisfied. In particular, if we match $y$ of the $(q_1,q_{x-1})$ pairs we will obtain $N_x(t^\prime)=2y$, $N_{x-1}(t^\prime)=\lfloor N_1/x\rfloor-y$, and $N_1(t^\prime)=\lfloor N_1/x\rfloor-y+r$ and what we want is
\begin{equation*}
2y \geq \lfloor N_1/x\rfloor - y + r \Rightarrow 3y \geq \lfloor N_1/x\rfloor + r \Rightarrow y \geq \frac{\lfloor N_1/x\rfloor + r}{3},
\end{equation*}  
which means that if we match approximately $1/3$ of the $(q_1,q_{x-1})$ pairs then we will have as many $q_x$ as we need in order to eliminate all $q_1$s in one step and all remaining $q_{x-1}$s in another step.

The minimum symmetry in the whole course of this schedule is
\begin{align*}
N_{x-1}(t^\prime) &= \lfloor N_1/x\rfloor - y
= \lfloor N_1/x\rfloor - \frac{\lfloor N_1/x\rfloor + r}{3}\\
&= \frac{2}{3}\lfloor N_1/x\rfloor - \frac{r}{3}
\geq \frac{2}{3}\lfloor N_1/x\rfloor - \frac{x-1}{3}.
\end{align*}

So, we have shown that if there are no $q_0$s in the initial configuration, then the symmetry breaking of the protocol on the schedule defined above is at most $N_{min}-((2/3)\lfloor N_1/x\rfloor - (x-1)/3)=N_{min}-((2/3)\lfloor N_{min}/x\rfloor - (x-1)/3)$. Next, we consider the case in which there are some $q_0$s in the initial configuration. Observe that in this protocol the $q_0$s can only increase, so their minimum cardinality is precisely their initial cardinality $N_0$. Consequently, in case $N_0\geq 1$ and $N_1\geq 1$, and if $N_{min}=\min\{N_0,N_1\}$, the symmetry breaking of the schedule defined above is $N_{min}-\min\{N_0,N_{x-1}(t^\prime)\}$. If, for some initial configuration, $N_0 \geq N_{x-1}(t^\prime)$ then the symmetry breaking is $N_{min}-N_{x-1}(t^\prime)\leq N_{min}-((2/3)\lfloor N_1/x\rfloor - (x-1)/3)$. This gives again $N_{min}-((2/3)\lfloor N_{min}/x\rfloor - (x-1)/3)$, when $N_1\leq N_0$, and less than $N_{min}-((2/3)\lfloor N_{min}/x\rfloor - (x-1)/3)$, when $N_1>N_0=N_{min}$. If instead, $N_0 < N_{x-1}(t^\prime) < N_1$, then, in this case, the symmetry breaking is $N_{min}-\min\{N_0,N_{x-1}(t^\prime)\}=N_0-N_0=0$. Finally, if $N_0=n$, then the symmetry breaking is 0. We conclude that for every initial configuration, the symmetry breaking of the above schedule is at most $N_{min}-N_{x-1}(t^\prime)\leq N_{min}-((2/3)\lfloor N_{min}/x\rfloor - (x-1)/3)$, for all $x\geq 2$, and 0, for $x=1$. Therefore, the symmetry of the \emph{Count-to-$x$} protocol is at least $(2/3)\lfloor N_{min}/x\rfloor + (x-1)/3=\Theta(N_{min})$, for $x\geq 2$, and $N_{min}$, for $x=1$.
\end{proof}

\subsection{A General Positive Result}
\label{subsec:general-positive}

\begin{theorem} \label{the:general-positive}
Any predicate of the form $\sum_{i\in [k]} a_iN_i\geq c$, for integer constants $k\geq 1$, $a_i\geq 1$, and $c\geq 0$, can be computed with symmetry more than $\lfloor N_{min}/(c/\sum_{j\in L} a_j+2)\rfloor-2=\Theta(N_{min})$.
\end{theorem}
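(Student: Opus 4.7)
The plan is to generalize Protocol \emph{Count-to-$x$} to arbitrary weights $a_i$. I would use counter states $q_0, q_1, \ldots, q_{c-1}$ together with an absorbing alarm state $q_{alarm}$ of output $1$, map input $i$ to $q_{a_i}$ (directly to $q_{alarm}$ if $a_i\geq c$), and let transitions merge counters by $(q_u,q_v)\to(q_{u+v},q_0)$ when $u+v<c$, $(q_u,q_v)\to(q_{alarm},q_{alarm})$ when $u+v\geq c$, and $(q_{alarm},q)\to(q_{alarm},q_{alarm})$ for every $q$. Correctness under any fair scheduler is immediate: the total counter weight is invariant among non-alarm states, so an alarm can ever be created iff $\sum_i a_iN_i\geq c$, after which it propagates to the entire population.

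To lower-bound the symmetry I would construct an imaginary symmetry-maximizing schedule analogous to the one in the proof of Proposition \ref{pro:count-to-x}. Let $L=\{i:N_i>0\}$, $S=\sum_{j\in L} a_j$, and $m=\lfloor N_{min}/(c/S+2)\rfloor$. Partition the nodes into $m$ node-disjoint \emph{columns}, each containing about $\lceil c/S\rceil+2$ copies of every input type in $L$ (feasible because every such type has at least $N_{min}\geq m\cdot(\lceil c/S\rceil+2)$ nodes by the choice of $m$). All remaining nodes, together with the surplus of any type $i$ with $N_i>N_{min}$, sit outside the columns in their initial state, whose multiplicity is already $\geq m$.

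The scheduler then processes the $m$ columns \emph{in lock-step}: at every step, whenever one effective interaction is fired inside column $1$, the same pattern of interaction is fired in parallel inside every other column. Since the columns are pairwise vertex-disjoint, this forms a valid matching of the interaction graph, hence a legal parallel step. Consequently every transient state ever appearing inside one column appears at the same time inside all $m$ columns, so its multiplicity is never below $m$. If $\sum a_iN_i\geq c$, each column independently accumulates weight $\geq c$ and simultaneously emits a pair of $q_{alarm}$ states, giving $2m$ alarms which are then spread through the population by parallel alarm-versus-non-alarm matchings that roughly double the alarm count per step until the whole population is absorbed. If the predicate fails, no alarm ever fires and the columns settle at sub-$c$ counters whose outputs are already correctly $0$.

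The main obstacle will be the same fine-grained bookkeeping that appears in the proof of Proposition \ref{pro:count-to-x}: a naive last-step matching inside each column could briefly drop the multiplicity of some intermediate $q_i$ below $m$. The fix is the $1/3$-matching trick, where only a carefully chosen fraction of the ready pairs is matched first so that all remaining low-count states can be absorbed in the very next parallel step. Quantifying this transient loss is exactly what introduces the $-2$ inside the floor, yielding a minimum symmetry of at least $m-2>\lfloor N_{min}/(c/S+2)\rfloor-2$. A secondary concern is the alarm-propagation phase, where one must take care never to isolate a single non-alarm node in its own class; this is again ensured by halving non-alarms symmetrically across the columns.
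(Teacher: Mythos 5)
Your protocol and the overall schedule (vertex-disjoint columns processed in lock-step through a gathering phase, followed by an alarm-dissemination phase) match the paper's proof of Theorem \ref{the:general-positive} almost exactly, but there is a genuine gap in the bookkeeping for the \emph{leftover} nodes, and it is precisely the point where the paper's proof does extra work that you skipped. You assert that the nodes of each input type left outside the columns have multiplicity ``already $\geq m$''. This is false in general: the leftover of type $i$ is $N_i - m(\lceil c/S\rceil + 2)$, which for the minimum-cardinality type can be arbitrarily small --- e.g., with $c=3$, $S=3$, $N_{min}=301$ you get $m=100$ columns holding $3$ copies of each type and a leftover of exactly $1$. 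When the columns consume their last copies of that type in lock-step, the global multiplicity of that initial state crashes to the leftover value, possibly to $1$, destroying the claimed bound. (The extra ``$+2$'' copies you place in each column could in principle be kept unconsumed to repair this, but you never say so, and as written the gathering inside a column is free to absorb them.) The paper's proof handles exactly this difficulty: it forms $\lfloor N_{min}/x\rfloor$ groups with only $x=\lceil c/\sum_{j\in L}a_j\rceil$ copies of each type, and then deliberately \emph{discards} all but $y=\lceil (N_{min}+r_{min})/(x+1)\rceil-1$ of the groups, dumping the discarded nodes into the residues so that every residue satisfies $r^\prime_j\geq y$; only then is it true that when a type is exhausted inside the groups its global count stays at $r^\prime_j\geq y$, and this $y$ is what yields the stated bound $\lfloor N_{min}/(c/\sum_{j\in L} a_j+2)\rfloor-2$.

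A secondary arithmetic slip: feasibility of your columns does not follow from your choice of $m$. You need $m(\lceil c/S\rceil+2)\leq N_{min}$, but $m=\lfloor N_{min}/(c/S+2)\rfloor$ only guarantees $m(c/S+2)\leq N_{min}$, and $\lceil c/S\rceil+2$ exceeds $c/S+2$ whenever $S\nmid c$ (take $c=1$, $S=2$, $N_{min}=10$: then $m=4$ but $3\cdot 4=12>10$). Both problems disappear if you adopt the paper's choices ($x$ copies of each type per group, then pruning to $y$ groups). Your dissemination phase, by contrast, is sound in spirit: the paper implements the same balancing idea not by repeated doubling but by sequentially infecting each surviving state $q_i$ until $N_c\geq N_i$ holds and then wiping the remaining $q_i$s in a single matching step, which keeps $N_i\geq 2y-1\geq y$ throughout; your $1/3$-matching variant from Proposition \ref{pro:count-to-x} would serve equally well, provided the count of each non-alarm state is tracked individually rather than ``halved'' in aggregate.
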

\begin{proof}
We begin by giving a parameterized protocol (Protocol \ref{prot:positive-lc}) that stably computes any such predicate, and then we shall prove that the symmetry of this protocol is the desired one.

\floatname{algorithm}{Protocol}
\renewcommand{\algorithmiccomment}[1]{// #1}
\begin{algorithm}[!h]
  \caption{\emph{Positive-Linear-Combination}}\label{prot:positive-lc}
  \begin{algorithmic}
    \medskip
    \State $Q=\{q_0,q_1,q_2,\ldots,q_c\}$
    \State $I(\sigma_i)=q_{a_i}$, for all $\sigma_i\in X$
    \State $O(q_c)=1$ and $O(q)=0$, for all $q\in Q\bs\{q_c\}$
    \State $\delta$: 
    \begin{align*}
    (q_i,q_j)&\ra (q_{i+j},q_0), \mbox{ if } i+j<c\\
    &\ra (q_c,q_c), \mbox{ otherwise}
    \end{align*}
  \end{algorithmic}
\end{algorithm}

Take now any initial configuration $C_0$ on $n$ nodes and let $L\subseteq [k]$ be the set of indices of the initial states that are present in $C_0$. Let also $q_{min}$ be the state with minimum cardinality, $N_{min}$, in $C_0$. Construct $\lfloor N_{min}/x\rfloor$ groups, by adding to each group $x=\lceil c/\sum_{j\in L} a_j\rceil$ copies of each initial state. Observe that each group has total sum $\sum_{j\in L} a_jx=x\sum_{j\in L} a_j=\lceil c/\sum_{j\in L} a_j\rceil(\sum_{j\in L} a_j)\geq c$. Moreover, state $q_{min}$ has a residue $r_{min}$ of at most $x$ and every other state $q_i$ has a residue $r_i\geq r_{min}$. Finally, keep $y= \lceil (N_{min}+r_{min})/(x+1)\rceil-1$ from those groups and drop the other $\lfloor N_{min}/x\rfloor-y$ groups making their nodes part of the residue, which results in new residue values $r^\prime_j=x(\lfloor N_{min}/x\rfloor-y)+r_j$, for all $j\in L$. It is not hard to show that $y\leq r^\prime_j$, for all $j\in L$.  

We now present a schedule that achieves the desired symmetry. The schedule consists of two phases, the \emph{gathering} phase and the \emph{dissemination} phase. In the dissemination phase, the schedule picks a node of the same state from every group and starts aggregating to that node the sum of its group sequentially, performing the same in parallel in all groups. It does this until the alarm state $q_c$ first appears. When this occurs, the dissemination phase begins. In the dissemination phase, the schedule picks one after the other all states that have not yet been converted to $q_c$. For each such state $q_i$, it picks a $q_c$ which infects one after the other (sequentially) the $q_i$s, until $N_c(t)\geq N_i(t)$ is satisfied for the first time. Then, in a single step that matches each $q_i$ to a $q_c$, it converts all remaining $q_i$s to $q_c$.

We now analyze the symmetry breaking of the protocol in this schedule. Clearly, the initial symmetry is $N_{min}$. As long as a state appears in the groups, its cardinality is at least $y$, because it must appear in each one of them. When a state $q_i$ first becomes eliminated from the groups, its cardinality is equal to its residue $r_i^\prime$. Thus, so far, the minimum cardinality of a state is 
\begin{equation*}
\min\{y,\min_{j\in L} r^\prime_j\}=y=\left\lceil \frac{N_{min}+r_{min}}{x+1}\right\rceil-1>\left\lfloor \frac{N_{min}}{c/\sum_{j\in L} a_j+2}\right\rfloor-2.
\end{equation*}
It follows that the maximum symmetry breaking so far is less than
$N_{min}-\left\lfloor \frac{N_{min}}{c/\sum_{j\in L} a_j+2}\right\rfloor+2$.

Finally, we must also take into account the dissemination phase. In this phase, the $q_c$s are $2y$ initially and can only increase, by infecting other states, until they become $n$ and the cardinalities of all other states decrease until they all become 0. Take any state $q_i\neq q_c$ with cardinality $N_i(t)$ when the dissemination phase begins. What the schedule does is to decrement $N_i(t)$, until $N_c(t^\prime)\geq N_i(t^\prime)$ is first satisfied, and then to eliminate all occurrences of $q_i$ in one step. Due to the fact that $N_i$ is decremented by one in each step resulting in a corresponding increase by one of $N_c$, when $N_c(t^\prime)\geq N_i(t^\prime)$ is first satisfied, it holds that $N_i(t^\prime)\geq N_c(t^\prime)-1\geq N_c(t)-1\geq 2y-1\geq y$ for all $y\geq 1$, which implies that the lower bound of $y$ on the minimum cardinality, established for the gathering phase, is not violated during the dissemination phase. 

We conclude that the symmetry of the protocol in the above schedule is more than $\lfloor N_{min}/(c/\sum_{j\in L} a_j+2)\rfloor-2$. 
\end{proof}

\subsection{Output-stable States}
\label{subsec:output-stable}

Informally, a state $q\in Q$ is called \emph{output-stable} if its appearance in an execution guarantees that the output value $O(q)$ must be the output value of the execution. More formally, if $q$ is output-stable and $C$ is a configuration containing $q$, then the set of outputs of $C^\prime$ must contain $O(q)$, for all $C^\prime$ s.t. $C\rsa C^\prime$, where `$\rsa$' means \emph{reaches in one or more steps}. Moreover, if all executions under consideration stabilize to an agreement, meaning that eventually all nodes stabilize to the same output, then the above implies that if an execution ever reaches a configuration containing $q$ then the output of that execution is necessarily $O(q)$.

A state $q$ is called \emph{reachable} if there is an initial configuration $C_0$ and an execution on $C_0$ that can produce $q$. We can also define reachability just in terms of the protocol, under the assumption that if $Q_0\subseteq Q$ is the set of initial states, then any possible combination of cardinalities of states from $Q_0$ can be part of an initial configuration. A \emph{production tree} for a state $q\in Q$, is a directed binary in-tree with its nodes labeled from $Q$ s.t. its root has label $q$, if $a$ is the label of an internal node (the root inclusive) and $b$, $c$ are the labels of its children, then the protocol has a rule of the form $\{b,c\}\rightarrow \{a,\cdot\}$ (that is, a rule producing $a$ by an interaction between a $b$ and a $c$ in any direction) \footnote{Whenever we use an unordered pair in a rule, like $\{b,c\}$, we mean that the property under consideration concerns both $(b,c)$ and $(c,b)$.}, and any leaf is labeled from $Q_0$. Observe now that if a path from a leaf to the root repeats a state $a$, then we can always replace the subtree of the highest appearance of $a$ by the subtree of the lowest appearance of $a$ on the path and still have a production tree for $q$. This implies that if $q$ has a production tree, then $q$ also has a production tree of depth at most $|Q|$, that is, a production tree having at most $2^{|Q|-1}$ leaves, which is a constant number, when compared to the population size $n$, that only depends on the protocol. Now, we can call a state $q$ \emph{reachable} (by a protocol $\ca$) if there is a production tree for it. These are summarized in the following proposition.

\begin{proposition} \label{pro:prod-tree}
Let $\ca$ be a protocol, $C_0$ be any (sufficiently large) initial configuration of $\ca$, and $q\in Q$ any state that is reachable from $C_0$. Then there is an initial configuration $C^\prime_0$ which is a sub-configuration of $C_0$ of size $n^\prime\leq 2^{|Q|-1}$ such that $q$ is reachable from $C^\prime_0$.
\end{proposition}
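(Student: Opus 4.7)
The plan is to prove the proposition by extracting a production tree for $q$ from the witness execution and then shrinking it via the depth-reduction argument sketched in the paragraph preceding the proposition.

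First I would show that if $q$ is reachable from $C_0$ by a (parallel) execution $C_0 = C^{(0)} \to C^{(1)} \to \cdots \to C^{(t)}$, then there is a production tree $T$ for $q$ whose multiset of leaf labels is a sub-configuration of $C_0$. I would build $T$ by tracing the ancestry of the node carrying $q$ in $C^{(t)}$: the root represents $q$; an internal node labeled $a$ is expanded by identifying the last transition in the execution that wrote $a$ on the corresponding node and attaching two children labeled with the two pre-states $b,c$ of that transition; a leaf is created exactly when a traced node's state has not been changed up to that point, and is thus labeled by its input state in $Q_0$. Each leaf corresponds to a distinct physical node of the population, so the multiset of leaf labels is componentwise dominated by $C_0$.

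Next I would apply the pruning step of the preceding paragraph: while some root-to-leaf path contains two occurrences of the same label $a$, replace the subtree rooted at the higher occurrence by the subtree rooted at the lower. This preserves the production-tree property (the root is still labeled $q$ and every internal node still corresponds to a transition of $\ca$), strictly decreases the number of internal nodes, and, because the replacement subtree is a sub-subtree of the one it replaces, weakly decreases the multiset of leaf labels. Iterating until every root-to-leaf path is label-distinct produces a tree $T'$ of depth at most $|Q|$, hence with at most $2^{|Q|-1}$ leaves, whose leaf multiset is still dominated by $C_0$. I would then take $C^\prime_0$ to be the initial configuration whose state-multiset equals the leaf multiset of $T'$; by construction $C^\prime_0$ is a sub-configuration of $C_0$ of size $n^\prime\leq 2^{|Q|-1}$. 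Processing the internal nodes of $T'$ in any post-order yields a sequential (hence parallel-legal, one-pair-per-step) schedule from $C^\prime_0$ that produces $q$, establishing reachability.

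The main obstacle I anticipate is the bookkeeping in the pruning step, namely the claim that each replacement weakly decreases the leaf multiset relative to $C_0$. This follows because the lower occurrence's subtree is literally a sub-subtree of the higher occurrence's subtree in the current tree, so its leaves form a sub-multiset of the higher subtree's leaves; substituting the former for the latter only removes leaves and never creates new ones. Once this invariant is in hand, the rest is a direct application of the pigeonhole principle to root-to-leaf paths of a rooted binary tree whose node labels come from an alphabet of size $|Q|$.
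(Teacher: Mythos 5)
Your skeleton is exactly the paper's intended argument (the paper proves the proposition only implicitly, via the paragraph on production trees: trace a tree, prune repeated labels along root-to-leaf paths to get depth at most $|Q|$ and at most $2^{|Q|-1}$ leaves, then simulate), and your pruning bookkeeping --- replacement by a sub-subtree weakly decreases the leaf multiset and strictly decreases the node count --- is correct. The genuine gap is the claim that ``each leaf corresponds to a distinct physical node of the population, so the multiset of leaf labels is componentwise dominated by $C_0$.'' This is false: the causal history of the node carrying $q$ is a DAG, and your tracing unfolds it into a tree, duplicating any physical node whose past is shared by two branches. Concretely, take two nodes $u,v$ with initial states $x,y$ and rules $(x,y)\ra (x^\prime,y^\prime)$ and $(x^\prime,y^\prime)\ra (q,\cdot)$, and let $u,v$ interact twice. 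Tracing $q$ on $u$ gives children labeled $x^\prime$ (on $u$) and $y^\prime$ (on $v$), and \emph{each} of these expands to the two leaves $x$ (on $u$) and $y$ (on $v$): the leaf multiset contains two copies of $x$ and two of $y$, although $C_0$ contains one of each, and no root-to-leaf path repeats a label, so pruning does not shrink it. Worse, no production tree for $q$ here has leaves dominated by $C_0$, because the tree semantics re-derives $x^\prime$ and $y^\prime$ from fresh $(x,y)$ pairs, whereas the real execution produces both in a single interaction. So defining $C_0^\prime$ as the leaf multiset of $T^\prime$ need not yield a sub-configuration of $C_0$, and this cannot be repaired within the production-tree method alone.

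What your construction does correctly deliver is support containment plus the size bound: every leaf of $T$ is labeled by a state present in $C_0$, pruning never introduces new labels, and $T^\prime$ has at most $2^{|Q|-1}$ leaves; your post-order simulation (interact the two children's representatives, discard the garbage byproduct) is also fine. The missing multiplicity bound is exactly where the proposition's parenthetical ``(sufficiently large)'' must do real work: if every state \emph{present} in $C_0$ has multiplicity at least $2^{|Q|-1}$, then \emph{any} multiset of at most $2^{|Q|-1}$ states drawn from the support of $C_0$ is automatically a sub-configuration, and your argument closes. This is also the only regime in which the paper uses the proposition --- in Theorem \ref{the:parity-imp} it is applied to $2C_{stable}$, all of whose present states have multiplicity at least $2f(n)\geq 2^{|Q|}$. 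So: drop the distinctness claim and invoke the largeness hypothesis when defining $C_0^\prime$; for a general large $C_0$ containing a scarce state (say one copy of $x$ among many $y$s), your proof as written does not establish the statement, because the pruned tree may demand more copies of $x$ than $C_0$ contains even though the actual execution reuses a single physical node.
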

Proposition \ref{pro:prod-tree} is crucial for proving negative results, and will be invoked in Section \ref{sec:symmetry-breaking}.

\begin{proposition} \label{pro:mixed-output-stable}
Let $p$ be a predicate. There is no protocol that stably computes $p$ (all nodes eventually agreeing on the output in every fair execution), having both a reachable output-stable state with output $0$ and a reachable output-stable state with output $1$. 
\end{proposition}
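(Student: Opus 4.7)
\textbf{The plan} is to derive a contradiction by simultaneously witnessing both output-stable states inside a single execution. Suppose $\ca$ stably computes a predicate $p$ and admits reachable output-stable states $q_0$ and $q_1$ with $O(q_0)=0$ and $O(q_1)=1$. Applying Proposition~\ref{pro:prod-tree} to each of $q_0,q_1$ yields two constant-size initial sub-configurations $C_0^{(0)}$ and $C_0^{(1)}$ (each of total multiplicity at most $2^{|Q|-1}$), consisting of initial states from $I(X)$, such that $q_0$ is reachable from $C_0^{(0)}$ and $q_1$ is reachable from $C_0^{(1)}$.

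Next I would \emph{glue} these two witnesses together on disjoint node sets. Consider the initial configuration $C_0 = C_0^{(0)} + C_0^{(1)}$ placed on a population partitioned as $V = V_0 \sqcup V_1$, where $V_0$ carries $C_0^{(0)}$ and $V_1$ carries $C_0^{(1)}$. Since both summands are multisets over $I(X)$, so is $C_0$, and hence $C_0$ is a legitimate initial configuration of $\ca$. The schedule witnessing $q_0$'s production uses interactions only among nodes of $V_0$, and symmetrically for $q_1$ and $V_1$. Using the parallel scheduler's freedom to pick any matching---including one restricted to a single side of the bipartition---I would concatenate the two schedules: first execute the production schedule inside $V_0$ that yields a $q_0$ (meanwhile the $V_1$-nodes stay frozen in their initial states), and then execute the production schedule inside $V_1$ that yields a $q_1$ (meanwhile the $V_0$-nodes stay frozen, so the earlier $q_0$ persists). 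The resulting configuration $C$ contains both a node in state $q_0$ and a node in state $q_1$.

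Finally I would invoke output-stability of both states at once: for every $C'$ with $C \rsa C'$, the multiset of outputs in $C'$ must contain both $O(q_0)=0$ and $O(q_1)=1$, so no descendant of $C$ is in output agreement, and $C$ itself clearly is not in agreement either. Extending the finite prefix $C_0 \rsa C$ to a fair infinite execution (by appending any fair continuation, which exists by standard arguments) yields a fair execution of $\ca$ on the initial configuration $C_0$ in which the nodes never agree on an output, contradicting the assumption that $\ca$ stably computes $p$.

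\textbf{Main obstacle.} The only delicate point is verifying that the two production schedules can be concatenated without interference---that $q_0$ genuinely persists in $V_0$ throughout the $V_1$-phase, and vice versa. This rests on the parallel scheduler being free to choose \emph{any} matching, in particular matchings confined entirely to one side of the bipartition; under the classical sequential scheduler the argument would go through unchanged, since we could still interleave the two schedules in any order without letting the two sides ever interact.
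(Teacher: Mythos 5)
Your proposal is correct and follows essentially the same route as the paper's proof: the paper likewise places the two constant-size production witnesses (the leaves of minimal production trees for the two output-stable states) on disjoint subsets of a single initial configuration, runs the two production schedules one after the other without interaction between the parts, and derives the contradiction from both output-stable states coexisting. Your explicit appeal to Proposition~\ref{pro:prod-tree} and your closing remark about non-interference are just packaged versions of steps the paper carries out directly with production trees, so there is no substantive difference.
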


An output-stable state $q$ is called \emph{disseminating} if $\{x,q\}\ra (q,q)$, for all $x\in Q$.

\begin{proposition} \label{pro:stable-disseminating}
Let $\ca$ be a protocol with at least one reachable output-stable state, that stably computes a predicate $p$ and let $Q_s\subseteq Q$ be the set of reachable output-stable states of $\ca$. Then there is a protocol $\ca^\prime$ with a reachable disseminating state that stably computes $p$.
\end{proposition}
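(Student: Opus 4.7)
Since $\ca$ stably computes $p$, every fair execution reaches an agreement on the output. By Proposition \ref{pro:mixed-output-stable}, all states in $Q_s$ share a single output value; call it $v$. Fix any $q^*\in Q_s$ and define $\ca^\prime$ with the same input/output alphabets, state set, and I/O functions as $\ca$, and with transition function $\delta^\prime$ obtained from $\delta$ by overriding, for every $x\in Q$, $\delta^\prime(q^*,x)=\delta^\prime(x,q^*)=(q^*,q^*)$; every other transition is left unchanged. By construction $q^*$ is disseminating in $\ca^\prime$, so the two remaining tasks are to show that $q^*$ is reachable in $\ca^\prime$ and that $\ca^\prime$ stably computes $p$.

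For reachability, take any production tree $T$ for $q^*$ in $\ca$. By the pruning property built into the definition of production trees (replace the subtree at the highest occurrence of a repeated label by the subtree at the lowest occurrence), we may assume the label $q^*$ occurs only at the root. Every non-root rule used in $T$ then acts on two states of $Q\setminus\{q^*\}$, so it is identical in $\ca$ and $\ca^\prime$, while the root rule $\{b,c\}\to \{q^*,\cdot\}$ with $b,c\ne q^*$ still fires in $\ca^\prime$ and produces $q^*$ in both output slots. Hence $T$ is also a production tree for $q^*$ in $\ca^\prime$.

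For correctness, fix any initial configuration $C_0$ and any fair execution $\alpha$ of $\ca^\prime$ from $C_0$. If $q^*$ never appears along $\alpha$, then only rules on which $\delta$ and $\delta^\prime$ agree are used, so $\alpha$ is a valid fair execution of $\ca$ from $C_0$ and therefore stabilizes to $p(C_0)$. Otherwise let $C_t$ be the first configuration of $\alpha$ containing $q^*$; the prefix $C_0\to \cdots\to C_t$ uses only transitions that coincide in $\ca$ and $\ca^\prime$, so it is a prefix of a fair execution of $\ca$ from $C_0$. Output-stability of $q^*$ in $\ca$ (together with the fact that $\ca$ stabilizes to an agreement) then forces $p(C_0)=O(q^*)=v$. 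From $C_t$ onwards, every pairwise interaction with $q^*$ yields two copies of $q^*$, so fairness drives $\alpha$ into the all-$q^*$ configuration, in which every node outputs $v=p(C_0)$.

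The main obstacle is the second case of the correctness argument: one must infer the predicate value $p(C_0)$ from the mere fact that $q^*$ appears in an execution of the \emph{modified} protocol $\ca^\prime$. The bridge is the observation that $\delta$ and $\delta^\prime$ coincide on all pairs not involving $q^*$, so the trajectory up to the first occurrence of $q^*$ is legal in $\ca$ as well; this is exactly what lets us transport the output-stability property of $q^*$ from $\ca$ into $\ca^\prime$, and it is the reason Proposition \ref{pro:mixed-output-stable} is invoked at the very start, to rule out the possibility that $Q_s$ contains output-stable states of both output values. Everything else---the disseminating property, the reachability of $q^*$, and the eventual convergence to the all-$q^*$ configuration---follows immediately from the construction and fairness.
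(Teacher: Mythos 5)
Your proof is correct, and its core is the same as the paper's: modify $\delta$ so that a chosen reachable output-stable state becomes disseminating, observe that executions of $\ca^\prime$ coincide with executions of $\ca$ up to the first appearance of that state, use output-stability together with eventual agreement to force $p(C_0)=O(q^*)$, and let dissemination plus fairness finish the stabilization. The differences are refinements rather than a different route. The paper first merges all of $Q_s$ into a single state $q$, also rewrites rules in which $q$ occurs as an \emph{output}, and argues correctness by running $\ca$ and $\ca^\prime$ on the same schedule and comparing stable outputs; you keep the state set intact, override only pairs in which $q^*$ is an \emph{input}, and verify directly that every fair execution of $\ca^\prime$ stabilizes correctly. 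Your version makes explicit two details the paper leaves implicit: that fairness transfers to $\ca$ on executions avoiding $q^*$ (the one-step relations of $\ca$ and $\ca^\prime$ agree on configurations not containing $q^*$, so the fairness condition carries over), and that a finite prefix reaching $q^*$ extends to some fair execution of $\ca$, pinning $p(C_0)=O(q^*)$. Incidentally, your invocation of Proposition \ref{pro:mixed-output-stable} is not actually needed in your argument---you only ever use the value $O(q^*)$, never that all of $Q_s$ agrees---whereas in the paper's proof that proposition is genuinely load-bearing, since merging all of $Q_s$ into one state is sound only if all states of $Q_s$ share an output. One harmless slip: since your construction leaves rules with non-$q^*$ inputs untouched, the root rule $\{b,c\}\rightarrow\{q^*,\cdot\}$ with $b,c\neq q^*$ produces $q^*$ in the same slot(s) as in $\ca$, not ``in both output slots''; a single copy suffices for reachability, so nothing breaks.
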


\begin{theorem} \label{the:disseminating-symmetry}
Let $\ca$ be a protocol with a reachable disseminating state $q$ and let $\cc_0^d$ be the subset of its initial configurations that may produce $q$. Then the symmetry of $\ca$ on $\cc_0^d$ is $\Theta(N_{min})$.
\end{theorem}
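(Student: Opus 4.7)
The plan is to prove the lower bound $s(\ca,\cc_0^d) = \Omega(N_{min})$; the matching upper bound $s(\ca,c_0) \leq s(c_0) = N_{min}$ is immediate from the definition of symmetry. For every $c_0 \in \cc_0^d$ I will exhibit a schedule whose minimum multiplicity of a present state never drops below some $y = \Theta(N_{min})$. The strategy follows the ``gathering + dissemination'' template of Theorem \ref{the:general-positive}, with the gathering replaced by the parallel execution of a constant-size production tree for $q$, and the dissemination exploiting the disseminating rule $\{x,q\}\to(q,q)$.

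By Proposition \ref{pro:prod-tree}, since $q$ is reachable from $c_0$, there is a production tree $T$ for $q$ with at most $n^\prime \leq 2^{|Q|-1}$ leaves. Let $a_i$ be the number of leaves of $T$ labelled by initial state $i$, so that $\sum_i a_i = n^\prime$ and each $a_i$ is a constant bounded by $n^\prime$. Set $y = \lfloor N_{min}/(n^\prime+1)\rfloor = \Theta(N_{min})$ and carve out of $c_0$ a collection of $y$ pairwise-disjoint \emph{groups}, each a copy of the leaf multiset of $T$; this is feasible because $y\cdot a_i \leq y(n^\prime+1)\leq N_{min}\leq N_i$ for every initial state $i$. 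The residue contains $N_i - y a_i$ nodes in state $i$. For $i$ present in $c_0$, this residue is either $0$ (state entirely absorbed into groups) or at least $y$, since $y(a_i+1)\leq y(n^\prime+1)\leq N_i$ gives $N_i - y a_i \geq y$. Thus every present state has multiplicity at least $y$ after the partition.

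In the gathering phase, the scheduler executes $T$ in lockstep across all $y$ groups: at each step it picks a matching consisting of one copy of the current $T$-interaction per group, which is valid because the groups are disjoint. Since every group traverses the same sequence of configurations, the total multiplicity of any state appearing inside a group is at least $y$, while residue states remain untouched and stay at their post-partition multiplicity of at least $y$. When $T$ terminates, each group contains one node in state $q$ together with a constant-size multiset of ``byproduct'' states that is identical across groups, so in particular $N_q \geq y$ and every present state still has multiplicity at least $y$. In the dissemination phase, I use the disseminating rule to eliminate the non-$q$ states one at a time, in the style of Theorem \ref{the:general-positive}: to eliminate state $s$, first sequentially convert single $s$-nodes into $q$-nodes (each such step decreases $N_s$ and increases $N_q$ by one) until $N_q \geq N_s$; then, in one parallel step, match every remaining $s$ with a distinct $q$ and flip them all to $q$ simultaneously. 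Because this subroutine begins with $N_q \geq y$, the growth terminates at $N_s$ no smaller than roughly $(N_s^{\mathrm{start}} + N_q^{\mathrm{start}})/2 \geq y/2$, so $s$ retains multiplicity $\Omega(N_{min})$ right up until the one-shot elimination. States not yet processed are untouched and therefore remain at least $y$; $N_q$ only grows.

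The main obstacle is calibrating $y$ so that every residue is either $0$ or at least $y$ simultaneously for all states used by $T$, so that no small leftover group can silently drag symmetry down to a constant; this is handled by the single choice $y = \lfloor N_{min}/(n^\prime+1)\rfloor$, which works uniformly because $n^\prime$ depends only on the protocol and not on $c_0$. A second point that needs care is that parallel lockstep across groups must be realizable under the parallel scheduler, but this follows because the $y$ simultaneous interactions sit in disjoint pairs of nodes and therefore form a valid matching. Putting together the gathering bound, the dissemination bound, and the trivial upper bound $s(\ca,c_0)\leq N_{min}$ yields $s(\ca,\cc_0^d) = \Theta(N_{min})$.
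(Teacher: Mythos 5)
Your proof is correct and takes essentially the route the paper itself indicates: the paper omits the proof of Theorem \ref{the:disseminating-symmetry}, stating it is similar to those of Proposition \ref{pro:count-to-x} and Theorem \ref{the:general-positive}, and your schedule --- $\Theta(N_{min})$ disjoint groups running a constant-size production tree (Proposition \ref{pro:prod-tree}) in lockstep, followed by sequential infection via the disseminating rule until $N_q\geq N_s$ and then a one-shot matching elimination --- is precisely that gathering-plus-dissemination argument adapted to an arbitrary disseminating state. The only (harmless) slack is in your constants: with $y=\lfloor N_{min}/(n^\prime+1)\rfloor$ every residue is in fact always at least $y$ (never the ``$0$ or at least $y$'' dichotomy), and the dissemination subroutine bottoms out at roughly $y$ rather than $y/2$, neither of which affects the $\Theta(N_{min})$ conclusion.
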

Theorem \ref{the:disseminating-symmetry} emphasizes the fact that disseminating states can be exploited for maximum symmetry. We have omitted its proof, because it is similar to the proofs of Proposition \ref{pro:count-to-x} and Theorem \ref{the:general-positive}. This lower bound on symmetry immediately applies to single-signed linear combinations (where passing a threshold can safely result in the appearance of a disseminating state, because there are no opposite-signed numbers to inverse the process), thus, it can be used as an alternative way of arriving at Theorem \ref{the:general-positive}. On the other hand, the next proposition shows that this lower bound does not apply to linear combinations containing mixed signs, because protocols for them cannot have output-stable states.

\begin{proposition} \label{pro:mixed-signs-output-stable}
Let $p$ be a predicate of the form $\sum_{i\in [k]} a_iN_i\geq c$, for integer constants $k\geq 1$, $a_i$, and $c\geq 0$ such that at least two $a_i$s have opposite signs. Then there is no protocol, having a reachable output-stable state, that stably computes $p$. 
\end{proposition}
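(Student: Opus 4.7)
The plan is to argue by contradiction. Suppose $\ca$ is a protocol that stably computes $p$ and possesses a reachable output-stable state $q$, and set $b = O(q) \in \{0,1\}$. First I would fix any initial configuration $C_0$ from which $q$ can be produced by some finite schedule $\sigma$; such a $C_0$ exists by the definition of reachability (one could also shrink it to at most $2^{|Q|-1}$ nodes via Proposition \ref{pro:prod-tree}, but this is not needed here). Let $x$ denote the input assignment of $C_0$. Because $\ca$ stably computes $p$ with all nodes eventually agreeing on a common output, and because $q$ is output-stable, every fair extension of $\sigma$ must stabilize with all node outputs equal to $b$. Consequently $p(x)=b$, i.e.\ the reachability of $q$ from $C_0$ pins the predicate value of $C_0$'s input to $b$.

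Next, I would exploit the mixed-sign hypothesis to enlarge $C_0$ into a configuration whose predicate value is flipped while $q$ remains reachable. Pick an index $j\in[k]$ whose coefficient $a_j$ has the sign needed to flip the sum across the threshold $c$: $a_j<0$ if $b=1$, and $a_j>0$ if $b=0$. Since such a $j$ exists by assumption, choose an integer $m$ large enough so that appending $m$ copies of the input symbol $\sigma_j$ to $x$ produces a new input $x'$ with $p(x')=1-b$; this is possible because each additional $\sigma_j$ shifts the left-hand side by the fixed nonzero amount $a_j$. Let $C_0'$ be the corresponding initial configuration: $C_0$ together with $m$ new nodes in state $I(\sigma_j)$.

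I would then construct an execution $\alpha$ of $\ca$ on $C_0'$ that reproduces $q$. Replay the finite schedule $\sigma$ step-by-step, each step being a matching confined to the original $C_0$-subpopulation; the $m$ extra nodes remain untouched during this finite prefix, which is permissible under the parallel scheduler because any non-empty matching among the original nodes (singleton matchings included) is a legitimate choice. After $q$ has appeared, extend $\alpha$ to any infinite fair execution. By the output-stability of $q$ combined with the agreement property of stable computation, all nodes in $\alpha$ must eventually stabilize their outputs to $b$. But correctness of $\ca$ on $C_0'$ forces the stable output to be $p(x')=1-b$, a contradiction.

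The subtlest step, and where I would be most careful, is the initial implication ``$q$ reachable from $C_0 \Rightarrow p(x)=b$''. This is not immediate from the definition of output-stability alone, since that definition only asserts that $O(q)$ lies in the set of outputs of every reachable successor; one must combine it with the fact that stable computation forces all nodes in any fair execution to converge to a single output value, which then must coincide with $O(q)=b$. Once this linkage is firmly established, the remainder is routine: choosing $m$ to flip the predicate, and verifying that the parallel scheduler can quarantine the added nodes throughout the finite prefix that produces $q$.
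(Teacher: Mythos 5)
Your proposal is correct and follows essentially the same route as the paper's proof: fix a $C_0$ producing the output-stable state $q$, note that reachability of $q$ forces $p(C_0)=O(q)$, then use the mixed-sign hypothesis to pad $C_0$ with enough opposite-signed-coefficient input nodes to flip the predicate while $q$ remains producible on the original sub-configuration. Your added care about why reachability of $q$ pins down the predicate value (via the agreement requirement of stable computation) and about quarantining the new nodes under the parallel scheduler merely makes explicit steps the paper leaves implicit.
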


The proofs of Propositions \ref{pro:mixed-output-stable}, \ref{pro:stable-disseminating}, and \ref{pro:mixed-signs-output-stable} can be found in Appendix \ref{app:proofs-output-stable}.

\section{Harder Predicates}
\label{sec:symmetry-breaking}

In this section, we study the symmetry of predicates that, in contrast to single-signed linear combinations, do not allow for output-stable states. In particular, we focus on linear combinations containing mixed signs, like the \emph{majority} predicate, and also on modulo predicates like the \emph{parity} predicate. Recall that these predicates are not captured by the lower bound on symmetry of Theorem \ref{the:disseminating-symmetry}.

\subsection{Bounds for Mixed Coefficients}
\label{subsec:weak-positive}

We begin with a proposition stating that the majority predicate (also can be generalized to any predicate with mixed signs) can be computed with symmetry that depends on the difference of the state-cardinalities in the initial configuration. 

\begin{proposition} \label{pro:majority-weak-result}
The majority predicate $N_a-N_b>0$ can be computed with symmetry $\min\{N_{min},|N_a-N_b|\}$, where $N_{min}=\min\{N_a,N_b\}$. (The proof is given in Appendix \ref{app:majority-weak-result})
\end{proposition}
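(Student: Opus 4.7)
The plan is to use a slight variant of the classical four-state majority protocol of Angluin, Aspnes, and Eisenstat: take $Q=\{A,B,a,b\}$, with the inputs mapped to the strong tokens $A$ and $B$, outputs $O(A)=O(a)=1$ and $O(B)=O(b)=0$, and effective transitions
\[
(A,B)\ra(a,b),\quad (A,b)\ra(A,a),\quad (B,a)\ra(B,b),\quad (a,b)\ra(b,b),
\]
the last rule being the tie-breaker tuned so that the output on $N_a=N_b$ is $0$. Correctness in every fair execution is the standard AAE-style argument, which I would state but not dwell on; the new content is the symmetry bound.

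Fix an initial configuration $c_0$ and, without loss of generality, assume $N_a\geq N_b$, so that $N_{min}=N_b$ and $|N_a-N_b|=k$; write $s^*=\min\{N_{min},k\}$. The plan is to exhibit a single execution on $c_0$ along which every configuration has symmetry at least $s^*$, which by the definitions in Section~\ref{sec:preliminaries} is enough. The execution will consist of two phases. First, a single \emph{bulk cancellation} step in which the parallel scheduler picks a perfect matching of size $N_b$ between the $A$'s and the $B$'s and fires $(A,B)\ra(a,b)$ in all $N_b$ pairs at once; this yields a configuration with $k$ strong $A$'s, $N_b$ weak $a$'s, and $N_b$ weak $b$'s, whose symmetry is exactly $\min\{k,N_b\}=s^*$. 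Second, a \emph{paced conversion} phase in which the scheduler repeatedly fires $(A,b)\ra(A,a)$ in parallel: letting $\nu$ denote the current $b$-count, the scheduler matches $m=\min\{k,\nu-s^*\}$ pairs when $\nu>k$, and $m=\nu$ in the final step (invoked as soon as $\nu\leq k$). Since the $A$-count is invariant and the $a$-count only grows, the minimum present-state count in every intermediate configuration stays at least $s^*$; after the last step only $A$ and $a$ are present, with counts $k\geq s^*$ and $2N_b\geq s^*$, and the protocol has stabilized to the correct output.

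The main obstacle that the pacing rule is designed to sidestep is the natural pitfall of the conversion phase: a greedy schedule that fires $\min\{k,\nu\}$ conversions in one go can leave $\nu$ strictly in $(0,s^*)$, violating the desired bound. I would therefore carry out a short case analysis on the three regimes $\nu>2k$, $k<\nu\leq 2k$, and $\nu\leq k$ to confirm that the recurrence $\nu\mapsto\nu-m$ with the stated choice of $m$ terminates at $0$ without ever taking a value in $(0,s^*)$. The case $N_b>N_a$ is handled by the symmetric argument with the roles of the two sides swapped, while the tie case $N_a=N_b$ yields $s^*=0$, so the bound is vacuous and the $(a,b)\ra(b,b)$ rule is free to collapse symmetry in order to reach the correct output $0$.
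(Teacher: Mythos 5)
Your proposal is correct and takes essentially the same route as the paper: your protocol is isomorphic to the paper's leader/follower majority protocol (with $A,B,a,b$ corresponding to $(l,1),(l,-1),(f,1),(f,-1)$ and your tie-breaker $(a,b)\ra(b,b)$ playing the role of the $-1$-dominant follower rule), and your schedule --- one bulk cancellation step followed by paced conversions keeping every present state's count at least $s^*=\min\{N_{min},|N_a-N_b|\}$ --- matches the paper's schedule, merely unifying its two cases and replacing its sequential middle phase by parallel batches of size $\min\{k,\nu-s^*\}$. The only notable deviation is that the paper's cancellation rule $(l,1),(l,-1)\ra(f,-1),(f,-1)$ makes the tie case immediately stable with no symmetry breaking at all, whereas your $(A,B)\ra(a,b)$ leaves residual work there, which you correctly observe is vacuous for the claimed bound (and your added rule does require the brief fairness check you allude to, since it competes with $(A,b)\ra(A,a)$ --- the same point the paper spells out for its follower dynamics).
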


\begin{remark}
A result similar to Proposition \ref{pro:majority-weak-result} can be proved for any predicate of the form $\sum_{i\in [k]} a_iN_i-\sum_{j\in [h]} b_jN_j^\prime > c$, for integer constants $k,h,a_i,b_j\geq 1$ and $c\geq 0$.
\end{remark}

Still, as we prove in the following theorem, it is possible to do better in the worst case, and achieve any desired constant symmetry.

\begin{theorem} \label{the:majority-constant}
For every constant $k\geq 1$, the majority predicate $N_a-N_b>0$ can be computed with symmetry $k$. (The proof can be found in Appendix \ref{app:majority-weak-result})
\end{theorem}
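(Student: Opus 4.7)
My plan is to construct, for each constant $k\ge 1$, a protocol $\mathcal{A}_k$ with a constant (but $k$-dependent) number of states that stably computes $N_a>N_b$ and has symmetry at least $k$ on every initial configuration with $N_{min}\ge k$. The natural starting point is the protocol of Proposition~\ref{pro:majority-weak-result}, whose symmetry is $\min\{N_{min},|N_a-N_b|\}$; built on a three-state backbone $\{A,B,C\}$ with the classical rules $(A,B)\to(C,C)$, $(A,C)\to(A,A)$, $(B,C)\to(B,B)$, it already yields symmetry $\ge k$ as long as $|N_a-N_b|\ge k$. Hence the only regime where new ideas are needed is the near-tie regime $|N_a-N_b|<k$, in which a global cancellation leaves a ``surviving'' pocket of fewer than $k$ nodes whose dissemination would immediately violate the symmetry bound.

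To defeat the near-tie obstruction I would enlarge the state space with a ladder of auxiliary buffer states $L_1,\dots,L_m$ (and a symmetric copy $L_1',\dots,L_m'$ for $B$-majority), where $m=\Theta(k)$, together with transitions that let the symmetry-maximizing schedule defer the exposure of the small signed mass $N_a-N_b$. Concretely: instead of canceling all $N_b$ of the $(A,B)$ pairs in a single parallel batch, the schedule cancels only $N_b-k$ of them, leaving exactly $k$ copies of each of $A$ and $B$ together with $2(N_b-k)$ copies of $C$; the remaining $k$ $B$'s are then pushed through the ladder by parallel matchings of size exactly $k$ (rules like $(A,B)\to(L_1,L_1)$, $(L_i,C)\to(L_{i+1},L_{i+1})$ and finally $(L_m,C)\to(A,A)$), so that every intermediate configuration has each present state count equal to a positive multiple of $k$. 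For $N_b>N_a$ the symmetric ladder is used, and for $N_a=N_b$ the original full cancellation already gives symmetry $2N_{min}\ge 2k$.

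The main obstacle I anticipate is preserving correctness under an \emph{arbitrary} fair scheduler while the imaginary symmetry-maximizer follows this carefully choreographed ladder path. The extra buffer states must not allow an adversary to lock the system into a wrong-output stable configuration or to open up hidden bottlenecks outside the designed schedule. To handle this, each $L_i$ would come equipped with ``escape'' rules (for instance $(L_i,L_j)\to(A,C)$ and $(L_i,B)\to(C,B)$) chosen so that applying any of them preserves the signed mass $N_a-N_b$ encoded in the configuration, i.e.\ every reachable configuration projects, via those escape rules, onto a 3-state configuration with the same majority. The fair-correctness of the 3-state backbone is classical, so fairness guarantees that every execution eventually stabilizes to the correct output; at the same time, the symmetry-maximizing scheduler can refuse to fire the escape rules and instead follow the ladder.

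Finally, the $k$-symmetry claim is verified by a direct per-phase bookkeeping check on the constructed schedule: in every phase, exactly $k$ parallel interactions fire, each present state count is a positive multiple of $k$ at the start and end of the phase, and the total number of phases is bounded by a function of $N_{min}$ and $k$, so the schedule terminates with all nodes outputting the value of $N_a>N_b$ while never dropping any present state count below $k$.
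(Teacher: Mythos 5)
There is a genuine gap, on two levels. First, your fallback correctness argument fails: you misidentify the backbone of Proposition~\ref{pro:majority-weak-result}, which is the four-state leader/follower Protocol~\ref{prot:majority}, \emph{not} the three-state protocol $(A,B)\rightarrow(C,C)$, $(A,C)\rightarrow(A,A)$, $(B,C)\rightarrow(B,B)$. That three-state protocol does not stably compute exact majority under an adversarial fair scheduler: from $N_a=3$, $N_b=2$, alternately firing $(A,B)\rightarrow(C,C)$ and $(B,C)\rightarrow(B,B)$ reaches the all-$B$ configuration, which is stable and wrong. Second, and more fatally, your ladder cannot satisfy the conservation invariant you claim for it. Assign mass $+1,-1,0$ to $A,B,C$; then $(A,B)\rightarrow(L_1,L_1)$ forces $\mathrm{mass}(L_1)=0$, and $(L_i,C)\rightarrow(L_{i+1},L_{i+1})$ propagates mass $0$ up the ladder, so $(L_m,C)\rightarrow(A,A)$ creates $+2$ and your escape rule $(L_i,L_j)\rightarrow(A,C)$ creates $+1$ out of nothing. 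No consistent mass assignment exists for these rules, so ``every reachable configuration projects onto a 3-state configuration with the same majority'' is false: in a tie or $B$-majority instance, routing $(A,B)$ pairs through the ladder manufactures spurious $A$s, which can then disseminate to a stable all-$A$ (wrong) configuration. Fairness does not rescue this, since the wrong configuration is stable once reached.

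Even setting correctness aside, the symmetry bookkeeping breaks in exactly the regime you target. With unit encoding and $N_a-N_b=d$, $0<d<k$, cancelling $N_b-k$ pairs leaves $k+d$ copies of $A$ and $k$ of $B$ (not $k$ each, as you state); after pushing $k$ matched $(A,B)$ pairs into $L_1$, state $A$ survives with multiplicity $d<k$, violating the bound. The obstruction is information-theoretic for your encoding: the conserved signed mass is only $d<k$, so the majority-carrying states cannot all be kept at multiplicity $\geq k$. The paper's proof resolves precisely this with one idea your draft lacks: \emph{amplify the input encoding}, starting each node in $(l,\pm k)$ so the conserved mass becomes $k|N_a-N_b|\geq k$ whenever the majority is strict, together with mass-conserving split rules such as $(l,i),(f,\cdot)\rightarrow(l,i-1),(l,1)$ that spread this mass over at least $k$ unit leaders (Protocol~\ref{prot:const-majority}, with the schedule requiring $n\geq 2k(k+1)$). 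Without this amplification, or an equivalent device, the near-tie case cannot be made simultaneously correct and $k$-symmetric along the lines you propose.
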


\subsection{Predicates that Cannot be Computed with High Symmetry}
\label{subsec:negative}

We now prove a strong impossibility result, establishing that there are predicates that cannot be stably computed with much symmetry. The result concerns the \emph{parity} predicate, defined as $n\bmod 2 = 1$. In particular, all nodes obtain the same input, e.g., 1, and, thus, all begin from the same state, e.g., $q_1$. So, in this case, $N_{min}=n$ in every initial configuration, and we can here estimate symmetry as a function of $n$. The parity predicate is true iff the number of nodes is odd. So, whenever $n$ is odd, we want all nodes to eventually stabilize their outputs to $1$ and, whenever it is even, to $0$. If symmetry is not a constraint, then there is a simple protocol, described in \cite{AADFP04}, that solves the problem. Unfortunately, not only that particular strategy, but any possible strategy for the problem, cannot achieve symmetry more than a constant that depends on the size of the protocol, as we shall now prove. 

\begin{theorem} \label{the:parity-imp}
Let $\ca$ be a protocol with set of states $Q$, that solves the parity predicate. Then the symmetry of $\ca$ is less than $2^{|Q|-1}$.
\end{theorem}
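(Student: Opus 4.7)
I would prove this by contradiction. Assume that $\ca$ stably computes parity with symmetry at least $k\coleq 2^{|Q|-1}$. For every sufficiently large $n$ the assumption gives an execution $\alpha_n$ on the initial configuration $c_0^n$ of $n$ copies of $q_1$ that maintains minimum multiplicity at least $k$ throughout the prefix up to stability and ends at a stable configuration $C_n^*$ outputting $\text{parity}(n)$. In particular, every state appearing in $C_n^*$ has multiplicity at least $k$ in $C_n^*$, and $\text{supp}(C_n^*)\subseteq S_{\text{parity}(n)}\coleq\{q\in Q:O(q)=\text{parity}(n)\}$. The overall plan is to combine Proposition~\ref{pro:prod-tree} with a monotonicity principle (if $c\rsa c'$ is witnessed by a sequence of parallel matchings on some set of nodes, then on a strict superset of nodes one may run the same matchings on the original sub-population while leaving the extra nodes inert) to construct, from $n+\ell$ copies of $q_1$ for a small odd $\ell\leq k$, a reachable configuration that is output-stable with output $\text{parity}(n)$; since $\text{parity}(n+\ell)\neq\text{parity}(n)$ this contradicts the stable computation of parity on $n+\ell$ nodes.

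If $q_1\in\text{supp}(C_n^*)$ this is essentially immediate. By monotonicity, $(n+1)q_1\rsa C_n^*+\{q_1\}$ by running $\alpha_n$ on $n$ of the nodes and leaving one inert. The resulting configuration has support exactly $\text{supp}(C_n^*)$, and since every state there has multiplicity $\geq k\geq 2$ in $C_n^*$ already, every pair interaction available in $C_n^*+\{q_1\}$ is already available in $C_n^*$; by output-stability of $C_n^*$, none of them can introduce a state outside $S_{\text{parity}(n)}$. An induction over continuations then shows that $C_n^*+\{q_1\}$ is itself output-stable with output $\text{parity}(n)\neq\text{parity}(n+1)$, contradicting correctness on the initial configuration $(n+1)q_1$.

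The main obstacle is the case $q_1\notin\text{supp}(C_n^*)$. Here I would pick some $q^*\in\text{supp}(C_n^*)$ and apply Proposition~\ref{pro:prod-tree} to obtain an execution $\pi$ starting from $\ell\leq 2^{|Q|-1}=k$ copies of $q_1$ and ending at a configuration $D$ containing $q^*$; by monotonicity, $(n+\ell)q_1\rsa C_n^*+D$. The argument is completed once (i)~$\text{supp}(D)\subseteq\text{supp}(C_n^*)$, so that the closure argument of the previous paragraph makes $C_n^*+D$ output-stable with output $\text{parity}(n)$, and (ii)~$\ell$ is odd, so that $\text{parity}(n+\ell)\neq\text{parity}(n)$ and correctness is again violated. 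The hard part is arranging (i) and (ii) simultaneously. For (i), the plan is to read a production tree for $q^*$ off the history of $\alpha_n$ itself starting from the step at which $q_1$ is drained: after that step no configuration of $\alpha_n$ contains any state outside the set that finally materialises as $\text{supp}(C_n^*)$, so pruning a depth-$|Q|$ tree rooted at $q^*$ out of this history yields a $\pi$ whose byproducts all lie in $\text{supp}(C_n^*)$, while Proposition~\ref{pro:prod-tree} caps its number of leaves by $k$. For (ii), one can either pad $\pi$ with a single inert $q_1$ leaf to flip the parity of $\ell$, or apply a pigeonhole over the $\leq|Q|$ choices of $q^*\in\text{supp}(C_n^*)$ to find productions whose leaf counts differ by one. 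Carrying out this production-tree surgery while keeping the leaf count at or below $2^{|Q|-1}$ is the heart of the argument.
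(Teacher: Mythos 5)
You have assembled the right ingredients---Proposition~\ref{pro:prod-tree}, monotonicity of executions under adding inert nodes, and the key observation that when every state present has multiplicity at least $2^{|Q|-1}$, any small witness sub-configuration embeds into the stable configuration---and your first case ($q_1$ in the support of $C_n^*$) is essentially sound, though the ``induction over continuations'' is loose as stated: the extra node can participate in interactions and divert the trajectory, so the right way to close it is not step-by-step comparison of available interactions but exactly the production-tree argument (any state producible from $C_n^*+\{q_1\}$ has a production tree with at most $2^{|Q|-1}$ leaves, hence is producible from a sub-configuration that fits inside $C_n^*$, contradicting output-stability of $C_n^*$). The genuine gap is your second case, which you yourself flag as ``the heart of the argument'' and leave unresolved---and the surgery you sketch would fail for concrete reasons. (a) A production tree controls only the lineage of the produced state $q^*$; it says nothing about the \emph{byproducts} (the second components of the transitions used), so you cannot force $\mathrm{supp}(D)\subseteq\mathrm{supp}(C_n^*)$ by pruning a tree. (b) The claim that after the step at which $q_1$ is drained, no configuration of $\alpha_n$ contains a state outside the eventual $\mathrm{supp}(C_n^*)$ is unjustified and false in general: intermediate states may appear after $q_1$ vanishes and themselves vanish before stability. (c) The parity fix via an inert padding leaf leaves a live node in state $q_1\notin S_{\mathrm{parity}(n)}$ sitting next to $C_n^*+D$, which both breaks the support condition and reintroduces the very case you were trying to escape; and the pigeonhole over $q^*$ yielding leaf counts differing by one has no basis.

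The paper's proof avoids all of this bookkeeping with a single idea you are missing: instead of perturbing $n$ by a small odd $\ell$, \emph{double} it. Take odd $n$, let $\alpha$ reach the output-stable $C_{stable}$ with all multiplicities at least $f(n)\geq 2^{|Q|-1}$, and consider $C_{2n}$, whose parity is automatically flipped. Partition $C_{2n}$ into two copies of $C_n$ and run $\alpha$ in each half, reaching $2C_{stable}$, which has the same support as $C_{stable}$ with doubled multiplicities. Since $2n$ is even, $2C_{stable}$ is \emph{not} output-stable, so some fair continuation produces a state $q_0$ with output $0$; by Proposition~\ref{pro:prod-tree}, $q_0$ is already producible from a sub-configuration $C_{small}$ of $2C_{stable}$ of size at most $2^{|Q|-1}$. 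Because every state of $C_{stable}$ has multiplicity at least $f(n)\geq 2^{|Q|-1}$, the configuration $C_{small}$ is also a sub-configuration of $C_{stable}$ itself, so $C_{stable}$ can produce $q_0$---contradicting its output-stability with output $1$. Note that here there is no need to track byproducts at all: the contradiction is extracted \emph{inside} $C_{stable}$, not by building a new stable configuration on a larger population, so none of your conditions (i), (ii) ever arise. Your approach is not unsalvageable in spirit, but as proposed the case $q_1\notin\mathrm{supp}(C_n^*)$ does not go through, and the doubling construction is precisely the device that makes the theorem provable with the stated bound.
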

\begin{proof}
For the sake of contradiction, assume $\ca$ solves parity with symmetry $f(n)\geq 2^{|Q|-1}$. Take any initial configuration $C_{n}$ for any sufficiently large odd $n$ (e.g., $n\geq f(n)$ or $n\geq |Q|\cdot f(n)$, or even larger if required by the protocol). By definition of symmetry, there is an execution $\alpha$ on $C_n$ that reaches stability without ever dropping the minimum cardinality of an existing state below $f(n)$. Call $C_{stable}$ the first output-stable configuration of $\alpha$. As $n$ is odd, $C_{stable}$ must satisfy that all nodes are in states giving output $1$ and that no execution on $C_{stable}$ can produce a state with output $0$. Moreover, due to the facts that $\ca$ has symmetry $f(n)$ and that $\alpha$ is an execution that achieves this symmetry, it must hold that every $q\in Q$ that appears in $C_{stable}$ has multiplicity $C_{stable}[q]\geq f(n)$.

Consider now the initial configuration $C_{2n}$, i.e., the unique initial configuration on $2n$ nodes. Observe that now the number of nodes is even, thus, the parity predicate evaluates to false and any fair execution of $\ca$ must stabilize to output 0. Partition $C_{2n}$ into two equal parts, each of size $n$. Observe that each of the two parts is equal to $C_{n}$. Consider now the following possible finite prefix $\beta$ of a fair execution on $C_{2n}$. The scheduler simulates in each of the two parts the previous execution $\alpha$ up to the point that it reaches the configuration $C_{stable}$. So, the prefix $\beta$ takes $C_{2n}$ to a configuration denoted by $2C_{stable}$ and consisting precisely of two copies of $C_{stable}$. Observe that $2C_{stable}$ and $C_{stable}$ consist of the same states with the only difference being that their multiplicity in $2C_{stable}$ is twice their multiplicity in $C_{stable}$. A crucial difference between $C_{stable}$ and $2C_{stable}$ is that the former is output-stable while the latter is not. In particular, any fair execution of $\ca$ on $2C_{stable}$ must produce a state $q_0$ with output $0$. But, by Proposition \ref{pro:prod-tree}, $q_0$ must also be reachable from a sub-configuration $C_{small}$ of $2C_{stable}$ of size at most $2^{|Q|-1}$. So, there is an execution $\gamma$ restricted on $C_{small}$ that produces $q_0$.

Observe now that $C_{small}$ is also a sub-configuration of $C_{stable}$. The reason in that (i) every state in $C_{small}$ is also a state that exists in $2C_{stable}$ and, thus, also a state that exists in $C_{stable}$ and (ii) the multiplicity of every state in $C_{small}$ is restricted by the size of $C_{small}$, which is at most $2^{|Q|-1}$, and every state in $C_{stable}$ has multiplicity at least $f(n)\geq 2^{|Q|-1}$, that is, $C_{stable}$ has sufficient capacity for every state in $C_{small}$. But this implies that if $\gamma$ is executed on the sub-configuration of $C_{stable}$ corresponding to $C_{small}$, then it must produce $q_0$, which contradicts the fact that $C_{stable}$ is output-stable with output 1. Therefore, we conclude that $\ca$ cannot have symmetry at least $f(n)\geq 2^{|Q|-1}$. 
\end{proof}

\begin{remark}
Theorem \ref{the:parity-imp} constrains the symmetry of any correct protocol for parity to be upper bounded by a constant that depends on the size of the protocol. Still, it does not exclude the possibility that parity is solvable with symmetry $k$, for any constant $k\geq 1$. The reason is that, for any constant $k\geq 1$, there might be a protocol with $|Q|>k$ (or even $|Q|\gg k$) that solves parity and achieves symmetry $k$, because  $k<2^{|Q|-1}$, which is the upper bound on symmetry proved by the theorem. On the other hand, the $2^{|Q|-1}$ upper bound of Theorem \ref{the:parity-imp} excludes any protocol that would solve parity with symmetry depending on $N_{min}$.
\end{remark}

\section{Further Research}
\label{sec:conclusions}

In this work, we managed to obtain a first partial characterization of the predicates with symmetry $\Theta(N_{min})$ and to exhibit a predicate (parity) that resists any non-constant symmetry. The obvious next goal is to arrive at an exact characterization of the allowable symmetry of all semilinear predicates. 

Another question, concerns the parity predicate, but could possibly apply to other modulo predicates as well. Some preliminary results of ours, indicate that constant symmetry for parity can be achieved if the initial configuration has a sufficient number of \emph{auxiliary nodes} in a distinct state $q_0$. It seems interesting to study how is symmetry affected by auxiliary nodes and whether they can be totally avoided.

Another very challenging direction for further research, concerns networked systems (either static or dynamic) in which the nodes have memory and possibly also unique IDs. Even though the IDs provide an \emph{a priori} maximum symmetry breaking, still, solving a task and avoiding the process of ``electing'' one of the nodes may be highly non-trivial. But in this case, defining the role of a process as its complete local state is inadequate. There are other plausible ways of defining the role of a process, but which one is best-tailored for such systems is still unclear and needs further investigation.

Finally, recall that in this work we focused on the \emph{inherent} symmetry of a protocol as opposed to its \emph{observed} symmetry. One way to study the observed symmetry would be to consider \emph{random parallel schedulers}, like the one that selects in every step a maximum matching uniformly at random from all such matchings. Then we may ask \emph{``What is the average symmetry achieved by a protocol under such a scheduler?''}. In some preliminary experimental results of ours, the expected observed symmetry of the \emph{Count-to-5} protocol (i) if counted until the alert state $q_5$ becomes an absolute majority in the population, seems to grow faster than $\sqrt{n}$ and (ii) if counted up to stability, seems to grow as fast as $\log n$ (see Appendix \ref{app:experiments} for more details).

\newpage

\newpage

\appendix

\noindent\textbf{\huge APPENDIX}\normalsize 

\section{Proofs of Propositions \ref{pro:mixed-output-stable}, \ref{pro:stable-disseminating}, and \ref{pro:mixed-signs-output-stable}} \label{app:proofs-output-stable}

\noindent\textbf{Proposition \ref{pro:mixed-output-stable}.}
\emph{Let $p$ be a predicate. There is no protocol that stably computes $p$ (all nodes eventually agreeing on the output in every fair execution), having both a reachable output-stable state with output $0$ and a reachable output-stable state with output $1$.}

\begin{proof}
Let $\ca$ be such a protocol and let $q$ and $q^\prime$ be the two reachable output-stable states, such that $O(q)=0$ and $O(q^\prime)=1$. As both $q$ and $q^\prime$ are reachable, both have production trees. Denote by $L(a)$ the set of leaves of a minimum (w.r.t. its number of leaves) production tree for state $a$ together with their labels. Consider now an initial configuration $C_0$, in which a subset set of the nodes is labeled as $L(q)$ and another disjoint subset of nodes is labeled as $L(q^\prime)$. As any label in $L(q)\cup L(q^\prime)$ is an initial state, there is always such an initial configuration $C_0$. Now, consider the following finite prefix of a fair execution on $C_0$: The scheduler first simulates the production tree of $q$ on the nodes corresponding to $L(q)$ and then it simulates the production tree of $q^\prime$ on the nodes corresponding to $L(q^\prime)$. One way of performing the simulation is to always pick the next internal node of maximum depth, not picked yet. Such a node can only have nodes already picked or leaves as its children, otherwise there would be an unpicked internal node of greater depth, violating depth maximality. What the scheduler does after picking an internal node in state $a$ with children in states $b$ and $c$, is to select the appropriate interaction between two nodes in states $b$ and $c$ in order to produce a copy of state $a$ in the population. At the end of the first simulation, there will be a copy of state $q$ in the population and at the end of the second simulation there will be a copy of state $q^\prime$ in the population. As $q$ is output-stable, any fair execution having the above prefix producing $q$ must necessarily have eventually all nodes agree on output $O(q)=0$. But $q^\prime$ is also output-stable implying that all nodes must eventually agree on output $O(q^\prime)=1$, which is a contradiction.
\end{proof}

\noindent\textbf{Proposition \ref{pro:stable-disseminating}.} 
\emph{Let $\ca$ be a protocol with at least one reachable output-stable state, that stably computes a predicate $p$ and let $Q_s\subseteq Q$ be the set of reachable output-stable states of $\ca$. Then there is a protocol $\ca^\prime$ with a reachable disseminating state that stably computes $p$.}

\begin{proof}
We first show how to construct $\ca^\prime$ from $\ca$. Pick a single state $q\in Q_s$. Replace any occurrence of a $q^\prime\in Q_s$ in the transition function $\delta$ by $q$, eliminate duplicate rules, and remove from $Q$ all $q^\prime\in Q_s\bs\{q\}$. Finally, replace any rule $(x,y)\ra (z,w)$, where $x=q$, $y=q$, $z=q$, or $w=q$ by the rule $(x,y)\ra (w,w)$. This completes the construction of $\ca^\prime$. State $q$ is a disseminating state of $\ca^\prime$ because, by the last step of the construction, it holds that $\{x,q\}\ra (q,q)$ for all $x\in Q$. Moreover, $q$ is reachable because every $q^\prime\in Q_s$ is reachable in $\ca$, $q$ inclusive, and the above construction has only positively  affected the reachability of $q$.

It remains to show that $\ca^\prime$ stably computes $p$. As $\ca$ stably computes $p$, it suffices to show that when the two protocols are executed on the same schedule (including the choice of the initial configuration) their stable outputs are the same. Take any schedule in which $\ca$ produces a $q^\prime\in Q_s$ and consider the first time $t$ that this happens. As $q^\prime$ is output-stable, the stable output of $\ca$ on this schedule must be $O(q^\prime)$. Consider now $\ca^\prime$ on the same schedule. Before step $t$, the executions of $\ca^\prime$ and $\ca$ must be equivalent, because the construction has only affected rules containing at least one output-stable state. At step $t$, $\ca^\prime$ produces its disseminating state $q$, thus, its output is $O(q)=O(q^\prime)$, so the outputs of the two protocols agree on schedules producing output-stable states. Finally, for any schedule in which $\ca$ does not produce an output-stable state, the executions of $\ca^\prime$ and $\ca$ are equivalent on this schedule, thus, again their outputs agree.
\end{proof}

\noindent\textbf{Proposition \ref{pro:mixed-signs-output-stable}.} 
\emph{Let $p$ be a predicate of the form $\sum_{i\in [k]} a_iN_i\geq c$, for integer constants $k\geq 1$, $a_i$, and $c\geq 0$ such that at least two $a_i$s have opposite signs. Then there is no protocol, having a reachable output-stable state, that stably computes $p$.}

\begin{proof}
Let $\ca$ be such a protocol and let $q\in Q$ be a reachable output-stable state of $\ca$. Take an initial configuration $C_0$ that can produce $q$. As $q$ is output-stable, it must hold that the value of the predicate on $C_0$ is equal to $O(q)$. If $O(q)=1$ then $C_0$ must satisfy $\sum_{i\in [k]} a_iN_i\geq c$. Construct now another initial configuration $C_0^\prime$ by adding to $C_0$ as many nodes in a negative-coefficient initial state as required to violate $\sum_{i\in [k]} a_iN_i\geq c$. The value of the predicate $p$ on $C_0^\prime$ is equal to $0$ but $q$ can still be produced on the $C_0$ sub-configuration of $C_0^\prime$ implying that $\ca$'s output on $C_0^\prime$ is $1$. The latter violates the fact that $\ca$ stably computes $p$. If, instead, $O(q)=0$, then we can obtain a similar contradiction by adding a sufficient number of positive-coefficient nodes to $C_0$.
\end{proof}

\section{Proofs of Proposition \ref{pro:majority-weak-result} and Theorem \ref{the:majority-constant}}
\label{app:majority-weak-result}

\noindent\textbf{Proposition \ref{pro:majority-weak-result}.} 
\emph{The majority predicate $N_a-N_b>0$ can be computed with symmetry $\min\{N_{min},|N_a-N_b|\}$, where $N_{min}=\min\{N_a,N_b\}$.}

\begin{proof}
Initially, a node is in $(l,1)$ if its input is $a$ and in $(l,-1)$ if its input is $b$. The definition of the protocol is given in Protocol \ref{prot:majority}.

\floatname{algorithm}{Protocol}
\renewcommand{\algorithmiccomment}[1]{// #1}
\begin{algorithm}[!h]
  \caption{\emph{Majority}}\label{prot:majority}
  \begin{algorithmic}
    \medskip
    \State $Q=\{l,f\}\times\{-1,1\}$
    \State $I(a)=(l,1)$ and $I(b)=(l,-1)$
    \State $O(\cdot,-1)=0$ and $O(\cdot,1)=1$
    \State $\delta$: 
    \begin{align*}
    (l,i),(l,j)&\ra (f,-1),(f,-1), \mbox{ if } i+j=0\\
    (l,i),(f,j)&\ra (l,i),(f,i)\\
    (f,j),(l,i)&\ra (f,i),(l,i)\\
    (f,-1),(f,1)&\ra (f,-1),(f,-1)\\
    (f,1),(f,-1)&\ra (f,-1),(f,-1)
    \end{align*}
  \end{algorithmic}
\end{algorithm}

We first argue about the correctness of the protocol. Initially all nodes are $l$-leaders and $l$-leaders can only decrease via an interaction between an $(l,1)$ and an $(l,-1)$, in which case both become followers in state $(f,-1)$. The only things that followers do is to copy the data bit of the leaders (provided that at least one leader still exists) and to let the data bit $-1$ dominate a disagreement between two of them. Moreover, as long as there are at least two leaders with opposite data bits, due to fairness, an interaction between them will eventually occur. It follows that eventually, $\min\{N_a,N_b\}$ such eliminations will have occurred leaving $2\cdot\min\{N_a,N_b\}$ followers and $n-2\cdot\min\{N_a,N_b\}$ leaders. All leaders will have data bit 1 in case $a$s are the majority, data bit $-1$ in case $b$s are the majority, while there will be no leaders in case none of the two is a strict majority. In the first case, all followers will eventually copy 1, thus, all nodes will stabilize their output to 1. Observe that the $1$ of a follower may change many times to $-1$ due to its interactions with other followers that have not yet set their data bit to $1$, still fairness guarantees that eventually the unique (continuously reachable) stable configuration in which all followers have switched to $1$ after interacting with the leaders will occur. In the second case, all followers will eventually copy -1, thus, all nodes will stabilize their output to 0 and in the third case there are only followers, so the data bit -1 eventually dominates due to the last two rules of the protocol and eventually all nodes will stabilize their output to 0. In summary, if the $a$s form a strict majority all nodes stabilize to $1$, otherwise all nodes stabilize to $0$, thus, Protocol \ref{prot:majority} stably computes the majority predicate.

For symmetry, consider first those initial configurations which satisfy $|N_a-N_b|\geq \min\{N_a,N_b\}$. Consider the schedule that matches $\min\{N_a,N_b\}$ leaders with opposite data bits in its first step, leaving $|N_a-N_b|$ leaders agreeing on the majority (i.e., in the same state) and $2\cdot\min\{N_a,N_b\}$ followers in state $(f,-1)$. Up to this point, there is no symmetry breaking because the minimum cardinality that has appeared is still the initial minimum $\min\{N_a,N_b\}$. Next, the scheduler matches in one step $\min\{N_a,N_b\}$ followers to leaders and then in another step the rest $\min\{N_a,N_b\}$ followers to leaders, which leads to a stable configuration in which all followers have their output agree with the data bit of the leaders. As the minimum cardinality never has fallen below the initial minimum $\min\{N_a,N_b\}$, the symmetry is in this case at least $\min\{N_a,N_b\}$.  

Next, consider those initial configurations which satisfy $|N_a-N_b|<\min\{N_a,N_b\}$. Again, in the first step all opposite data bits are matched leaving $2\cdot\min\{N_a,N_b\}$ followers and $|N_a-N_b|\geq 0$ leaders. Observe that if $|N_a-N_b|=0$ then the configuration is already stable without any symmetry breaking. If $|N_a-N_b|\geq 1$, then the scheduler goes on by matching in one step $|N_a-N_b|$ followers to the leaders. Then it picks a leader and converts sequentially from the remaining followers, until precisely $|N_a-N_b|$ of them remain. Those are then converted in one step by being matched to the leaders. The minimum cardinality of a state in this schedule is $|N_a-N_b|$ and the initial minimum is $\min\{N_a,N_b\}$, so the symmetry breaking is $\min\{N_a,N_b\}-|N_a-N_b|$ and the symmetry is on those initial configurations $|N_a-N_b|$.
\end{proof}

\noindent\textbf{Theorem \ref{the:majority-constant}.} 
\emph{For every constant $k\geq 1$, the majority predicate $N_a-N_b>0$ can be computed with symmetry $k$.}

\begin{proof}
The idea is to multiply both $N_a$ and $N_b$ by $k$ so that their difference becomes $k|N_a-N_b|$. In this manner, the difference will become at least $k$ (in absolute value) whenever there is a strict majority which can be exploited for computation with symmetry $k$. Fortunately, multiplying both $N_a$ and $N_b$ by $k\geq 1$ does not affect the value of the majority predicate, but only the winning difference. 

To this end, we set the state of a node initially to $(l,k)$ if its input is $a$ and to $(l,-k)$ if its input is $b$. The definition of the protocol is given in Protocol \ref{prot:const-majority}.

\floatname{algorithm}{Protocol}
\renewcommand{\algorithmiccomment}[1]{// #1}
\begin{algorithm}[!h]
  \caption{\emph{$k$-Symmetry-Majority}}\label{prot:const-majority}
  \begin{algorithmic}
    \medskip
    \State $Q=\{l,f\}\times\{-k,-(k-1),\ldots,0,\ldots,k-1,k\}$
    \State $I(a)=(l,k)$ and $I(b)=(l,-k)$
    \State $O(\cdot,j)=0$, for all $-k\leq j\leq 0$ and $O(\cdot,i)=1$, for all $0<i\leq k$
    \State $\delta$: 
    \begin{align*}
    (l,i),(l,j)&\ra (l,i+j),(f,1), \mbox{ if } i,j \mbox{ have opposite signs and } i+j>0\\
               &\ra (l,i+j),(f,0), \mbox{ if } i,j \mbox{ have opposite signs and } i+j<0\\
               &\ra (f,0), (f,0), \mbox{ if } i+j=0\\
    (l,i),(f,\cdot)&\ra (l,i-1),(l,1), \mbox{ if } i\in\{2,3,\ldots,k\}\\
    (f,\cdot),(l,i)&\ra (l,1),(l,i-1), \mbox{ if } i\in\{2,3,\ldots,k\}\\    
    (l,j),(f,\cdot)&\ra (l,j+1),(l,-1), \mbox{ if } j\in\{-k,-(k-1),\ldots,-2\}\\
    (f,\cdot),(l,j)&\ra (l,-1),(l,j+1), \mbox{ if } j\in\{-k,-(k-1),\ldots,-2\}\\
    (l,i),(f,\cdot)&\ra (l,i),(f,i), \mbox{ if } i\in\{-1,0,1\}\\
    (f,\cdot),(l,i)&\ra (f,i),(l,i), \mbox{ if } i\in\{-1,0,1\}\\        
    (f,0),(f,1)&\ra (f,0),(f,0)\\
    (f,1),(f,0)&\ra (f,0),(f,0)
    \end{align*}
  \end{algorithmic}
\end{algorithm}

For correctness, initially all nodes are $l$-leaders. Now, $l$-leaders, apart from decreasing as in Protocol \ref{prot:majority}, can also increase. This occurs whenever an $(l,i)$, with $|i|\geq 2$, meets a follower, in which case the follower becomes a leader taking one unit of the other leader's count. Still, as in Protocol \ref{prot:majority}, as long as there are at least two leaders with opposite data bits, due to fairness, an interaction two such leaders will eventually occur. Eventually, all leaders will have a positive data bit in case $a$s are the majority, and a non-positive in case $a$s are not the majority. From that point on, no leader can change its output and all followers will eventually copy this output.

For symmetry, in case $N_a=N_b$, then the scheduler can pick a perfect bipartite matching between the $(l,k)$s and the $(l,-k)$s to convert them to $(f,0)$ and, thus, stabilize to output 0 without any symmetry breaking. The case, $N_b > N_a$ is simpler then the $N_a > N_b$ because the default output of the followers is 0, while in the $N_a > N_b$ case there is a small additional difficulty due to the fact that all $(f,0)$s have to be converted to $(f,1)$s. So, w.l.o.g. we focus on the $N_a>N_b$ case and we only give a proof for the special case in which $N_a=N_b+1$ as the other cases are similar. 

So, assume that $N_a=N_b+1$. The construction requires that $n\geq 2k(k+1)$. The initial configuration consists of $N_a=N_b+1$ nodes in state $(l,k)$ and $N_b$ nodes in $(l,-k)$. We present a schedule with symmetry $k$. The scheduler first picks a matching between $(l,k)$s and $(l,-k)$s of size $\lceil k/2\rceil$. This introduces $k$ copies of state $(f,0)$ and leaves $N_b-\lceil k/2\rceil$ nodes in state $(l,-k)$ and $N_b-\lceil k/2\rceil+1$ nodes in state $(l,k)$. Now isolate $k+1$ nodes in state $(l,k)$ and $k$ nodes in state $(l,-k)$. The remaining nodes in states $(l,k)$ and $(l,-k)$ (equal of each) are $n-k-(k+1)-k\geq 2k(k+1)-3k-1=2k^2-k-1$. These are converted to $(f,0)$s in one step, so we now have the initial $k$ $(f,0)$s, the new $(f,0)$s that are at least $2k^2-k-1$, $k+1$ $(l,k)$s, and $k$ $(l,-k)$s. Together the $(l,k)$s and $(l,-k)$s hold a total count of $k(k+1)+k^2=2k^2+k$ and together the new $(f,0)$s, the $(l,k)$s, and the $(l,-k)$s are at least $2k^2-k-1+(k+1)+k=2k^2+k$ nodes. So, there are enough nodes (the initial $(f,0)$s excluded) to distribute on them the count as follows. First the scheduler picks the $(l,k)$s and matches them in one step to $k+1$ nodes in $(f,0)$. This leaves $k+1$ nodes in $(l,k-1)$ and $k+1$ nodes in $(l,1)$. Then it matches the $(l,k-1)$s to $(f,0)$s, introducing $k+1$ more $(l,1)$s and leaving $k+1$ nodes in $(l,k-2)$. It continues in the same way until no count is greater than $1$, in this way distributing the counts of the $k+1$ nodes in $(l,k)$ to $k(k+1)$ copies of $(l,1)$. Observe that during this process the initial $(l,k)$s are always in identical states going in parallel through the sequence of states $(l,k),(l,k-1),(l,k-2),\ldots,(l,1)$, so each state on them is the state of $k+1$ other nodes. Moreover, their first matching with $(f,0)$s introduces $k+1$ $(l,1)$s in one step and from that point on (during this particular process) $(l,1)$s only increase, so the cardinality of $(l,1)$s does not go below $k+1$. Next (or in parallel), it does the same with the $k$ nodes in $(l,-k)$ leaving $k^2$ copies of $(l,-1)$. Observe that even though $(f,0)$s decrease during these processes, their cardinality never goes below $k$ due to the initial set of $k$ $(f,0)$s. So, at this point there are at least $k$ $(f,0)$s, $k^2+k$ $(l,1)$s, and $k^2$ $(l,-1)$, while the minimum multiplicity of a state has never dropped below $k$. The scheduler now matches in one step all the $(l,-1)$s to $(l,1)$s leaving in the population at least $k^2+k$ $(f,0)$s and precisely $k$ $(l,1)$s. Then, the scheduler matches all $(l,1)$s to $(f,0)$s, thus, introducing in one step $k$ $(f,1)$s (and still having at least $k^2$ $(f,0)$s), and then picks an $(l,1)$ and starts converting sequentially $(f,0)$s to $(f,1)$s until precisely $k$ $(f,0)$s have remained. Finally, it matches the remaining $k$ $(f,0)$s to the $k$ $(l,1)$s to convert all $(f,0)$s to $(f,1)$s in one step. At this point the protocol has stabilized and the multiplicity of no state has ever dropped below $k$.
\end{proof}

\section{Experiments on the Expected Observed Symmetry of Count-to-5}
\label{app:experiments}

\begin{figure}[!hbtp]
   \centering{
        \subfloat[]{
        \includegraphics[width=0.44\textwidth]{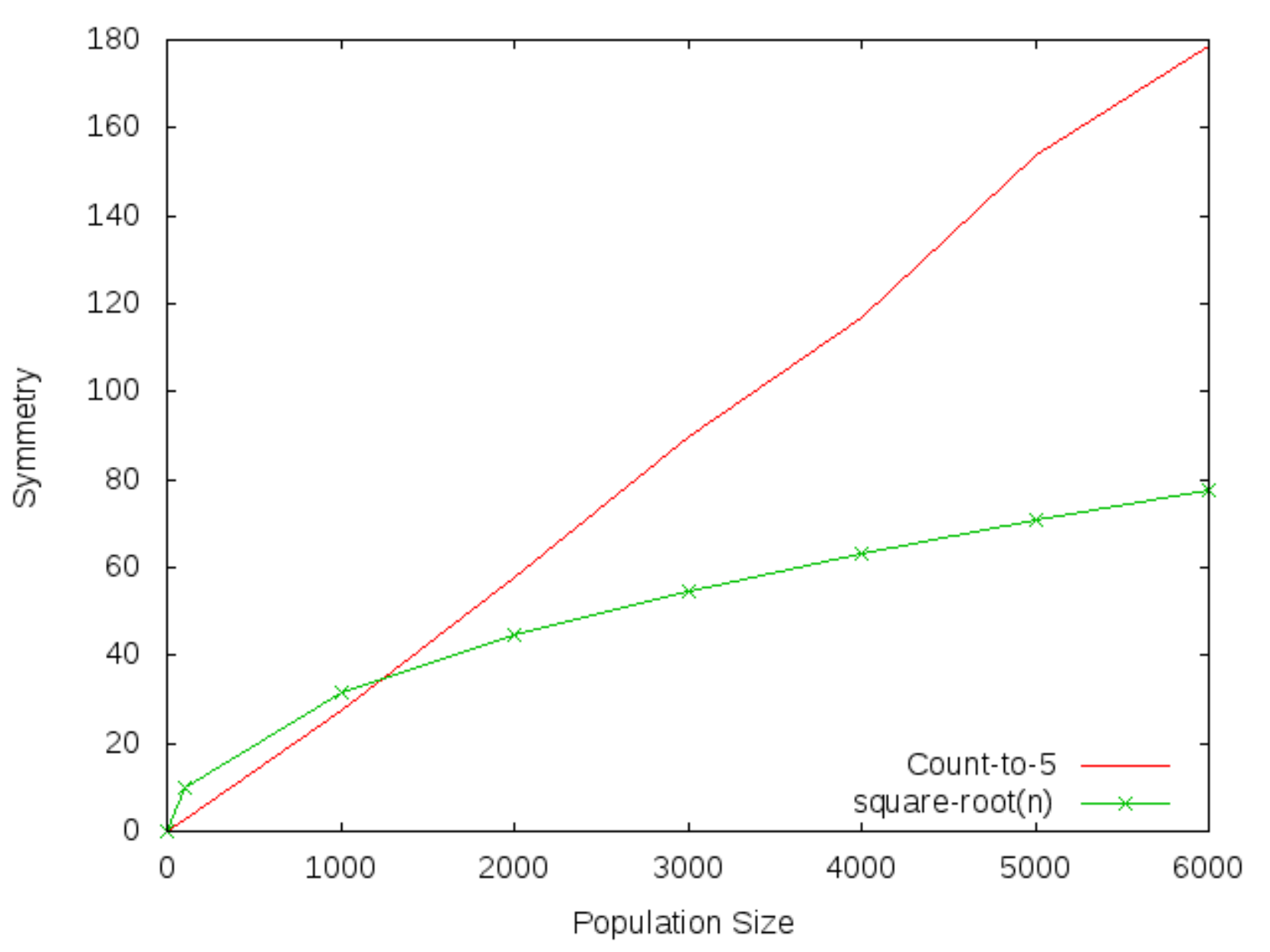}
        \label{fig:count-to-5-early}}
	\hspace{1cm}
        \subfloat[]{
        \includegraphics[width=0.44\textwidth]{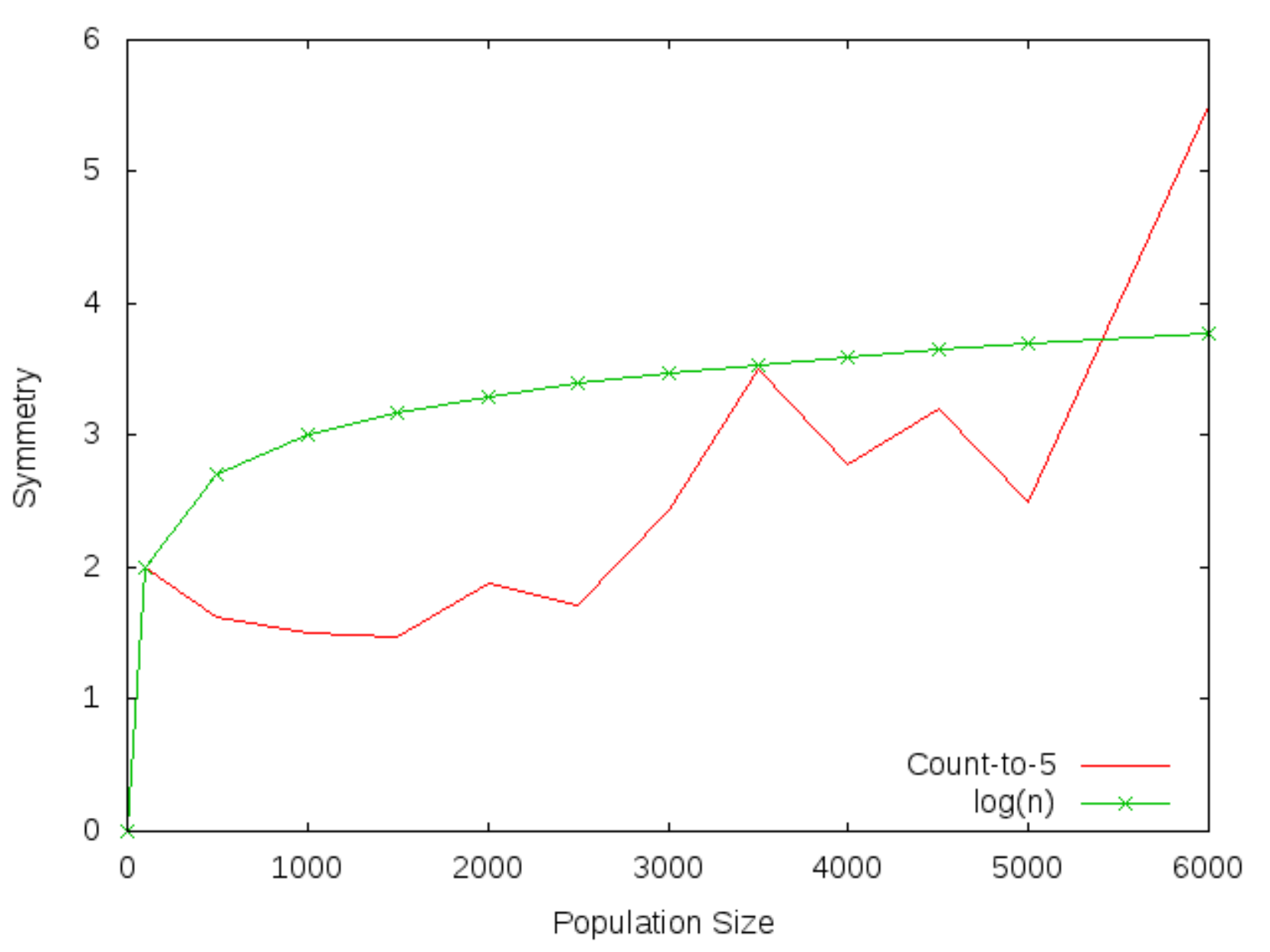}
        \label{fig:count-to-5-min}}
        }
   \caption{The experiments were performed with the NETCS simulator \cite{ALMS15}. The scheduler selects in every step a maximum cardinality matching, uniformly at random from all maximum matchings of the complete interaction graph. The implemented protocol is the \emph{Count-to-x} protocol of Section \ref{subsec:count-to-x}, for $x=5$. In a fraction of the populations of size up to 6000 nodes, several repetitions were performed and the average observed symmetry achieved by the protocol is plotted. The initial configuration is always the one resulting from assigning to all nodes the input value 1. (a) The observed symmetry of \emph{Count-to-5} (red line) is calculated up to the point that the alert state $q_5$ first becomes an absolute majority in the population and seems to grow faster than $\sqrt{n}$ (green line). (b) The observed symmetry of \emph{Count-to-5} (red line) is calculated up to stability and seems to grow as fast as $\log n$ (green line).} \label{fig:count-to-5-experiment}
\end{figure}

\noindent \textbf{Acknowledgements.} We would like to thank Dimitrios Amaxilatis for setting up and running, in the NETCS simulator of population protocols and network constructors \cite{ALMS15}, the experiments for the evaluation of the expected observed symmetry of the \emph{Count-to-5} protocol.

\end{document}